\setlist[itemize]{label=\text{\tiny$\blacksquare$}}
\addcolon\nolinkurl{#1}}}
\addcolon\nolinkurl{#1}}}
\newlist{defaultenumerate}{enumerate}{3}
\setlist[defaultenumerate,1]{label=\arabic*.}%
\setlist[defaultenumerate,2]{label=\arabic*.}%
\setlist[defaultenumerate,3]{label=\arabic*.}%
\def\@cite#1#2{\textup{[{#1\if@tempswa , #2\fi}]}}
\theoremstyle{plain}
\newtheorem{theorem}{Theorem}
\newtheorem{corollary}{Corollary}
\newtheorem{lemma}{Lemma}
\newtheorem{fact}{Fact}
\newtheorem{conjecture}{Conjecture}
\theoremstyle{definition}
\newtheorem{definition}{Definition}
\newtheorem{remark}{Remark}
\newcommand{\ket}[1]{| #1 \rangle}
\newcommand{\abs}[1]{\left| #1 \right|}
\newcommand{\calA}{\mathcal{A}}
\newcommand{\calB}{\mathcal{B}}
\newcommand{\calM}{\mathcal{M}}
\newcommand{\lift}{\mathrm{lift}}
\newcommand{\lifts}{\mathrm{lifts}}
\newcommand{\boldzero}{\mathbf{0}}
\DeclareMathOperator{\PostQ}{PostQ}
\DeclareMathOperator{\NQ}{NQ}
\DeclareMathOperator{\rdeg}{rdeg}
\DeclareMathOperator{\ndeg}{ndeg}
\DeclareMathOperator{\sdeg}{deg_{\pm}}
\DeclareMathOperator{\odeg}{odeg}
\DeclareMathOperator{\expect}{E}
\DeclareMathOperator{\ber}{B}
\DeclareMathOperator{\bs}{bs}
\DeclareMathOperator{\fbs}{fbs}
\DeclareMathOperator{\supp}{supp}
\DeclareMathOperator{\rc}{RC}
\DeclareMathOperator{\hit}{hit}
\DeclareMathOperator{\unhit}{unhit}
\DeclareMathOperator{\argmin}{argmin}
\DeclareMathOperator{\Inf}{Inf}
\DeclareMathOperator{\AS}{AS}
\DeclareMathOperator{\AND}{AND}
\DeclareMathOperator{\OR}{OR}
\DeclareMathOperator{\ANDOR}{AND-OR}
\DeclareMathOperator{\PARITY}{PARITY}
\DeclareMathOperator{\Cmin}{C_{min}}
\DeclareMathOperator{\Cmindown}{C_{min}^{\downarrow}}
\DeclareMathOperator{\bsmin}{bs_{min}}
\DeclareMathOperator{\C}{C}
\DeclareMathOperator{\D}{D}
\DeclareMathOperator{\R}{R}
\DeclareMathOperator{\s}{s}
\DeclareMathOperator{\fbsmin}{fbs_{min}}
\DeclareMathOperator{\fbsmindown}{fbs_{min}^{\downarrow}}
\DeclareMathOperator{\RC}{RC}
\DeclareMathOperator{\RCmindown}{RC_{min}^{\downarrow}}
\newcommand{\tO}{\widetilde{O}}
\algnewcommand{\Let}{let }
\algnewcommand\algorithmicinput{\textbf{Input:}}
\algnewcommand\Input{\item[\algorithmicinput]}
\algnewcommand\algorithmicoutput{\textbf{Output:}}
\algnewcommand\Output{\item[\algorithmicoutput]}
\renewcommand{\emptyset}{\varnothing}
\title{\vspace{-6ex} \bfseries Rational degree is polynomially related to degree}
\author{\normalsize
Robin Kothari\footnotemark[1]\qquad
Matt Kovacs-Deak\footnotemark[2]\qquad
Daochen Wang\footnotemark[3]\qquad
Rain Zimin Yang\footnotemark[4]
}
\date{\vspace{-5ex}}
\begin{document}

\maketitle

\begin{abstract}\noindent
We prove that $\deg(f) \leq \tO(\rdeg(f)^3)$ for every Boolean function $f$, where $\deg(f)$ is the degree of $f$ and $\rdeg(f)$ is the rational degree of $f$. This resolves the second of the three open problems stated by Nisan and Szegedy, and attributed to Fortnow, in 1994 \cite{degree_nisan_1994}.
\end{abstract}

\makeatletter

\begingroup
\renewcommand{\thefootnote}{\fnsymbol{footnote}}
\renewcommand{\@makefntext}[1]{#1}
\footnotetext[1]{%
\footnotemark[1]Google\;\;\;
\footnotemark[2]University of Maryland \;\;\;
\footnotemark[3]University of British Columbia \;\;\;
\footnotemark[4]University of British Columbia
}

\footnotetext[3]{
\footnotemark[3]Corresponding author: \texttt{wdaochen@gmail.com}}
\endgroup

\makeatother

\vspace{-2ex}

\section{Introduction}

The degree $\deg(f)$ of a Boolean function $f\colon \{0,1\}^n \to \{0,1\}$ is the minimum value of $\deg(r)$ such that $r$ is a real polynomial and $f = r$ on $\{0,1\}^n$. The rational degree $\rdeg(f)$ of $f$ is the minimum value of $\max(\deg(p),\deg(q))$ such that $p,q$ are real polynomials and $f=p/q$ on $\{0,1\}^n$. Clearly, $\rdeg(f) \leq \deg(f)$ as can be seen by taking the denominator $q$ to be $1$. On the other hand, it is unclear whether $\rdeg(f)$ could be much smaller than $\deg(f)$.

Degree is polynomially related to almost all Boolean complexity measures. Rational degree characterizes the exact postselected quantum query complexity \cite{postqe_mahadev_2015}. Whether rational degree is polynomially related to degree has remained an open problem since it was first stated by Nisan and Szegedy, and attributed to Fortnow, over three decades ago \cite{degree_nisan_1994}.

Since then, the problem has been reiterated in works such as \cite{ndeg_dewolf_2000,thesis_dewolf_2001,nondeterministic_dewolf_2003,video_sherstov_2013,postqe_mahadev_2015,thesis_cade_2020,implications_aaronson_2021,rdeg_iyer_2025}. Rational degree can also be motivated from a variety of perspectives beyond quantum postselection:
\begin{enumerate}
    \item The analogous randomized query measure, exact postselected randomized query complexity, equals the \emph{certificate complexity} \cite[Theorem 16]{thesis_cade_2020}. Therefore, rational degree can be viewed as a natural quantum notion of certificate complexity. (Note that rational degree is distinct from Aaronson's definition of quantum certificate complexity \cite{qcertificate_aaronson_2008}.)
    
    \item We have $\sdeg(f) \leq 2\rdeg(f)$ by squaring and shifting $p$, where $\sdeg(f)$ is the \emph{sign degree} (or \emph{threshold degree}) of $f$, that is, the minimum degree of a real polynomial that agrees in sign with $(-1)^{f(x)}$ for all $x\in \{0,1\}^n$. Therefore, $\sdeg(f)/2 \leq \rdeg(f) \leq \deg(f)$.
    Sign degree and degree are well-studied complexity measures \cite{slicing_saks_1993,degree_nisan_1994,analysis_odonnell_2014}. As rational degree inherits structural properties of both, it helps us better understand their relationship.
    
    \item It is not hard to see that $\rdeg(f) = \max(\ndeg(f),\ndeg(\neg f))$, where $\neg$ denotes negation and $\ndeg(f)$ is the minimum degree of a real polynomial $s$ such that, for all $x \in \{0,1\}^n$, $s(x) = 0$ if and only if $f(x) = 0$. (This is \cref{fact:rdeg_ndeg}, for which we give a proof for completeness.) In combinatorics, the Alon-Füredi theorem \cite{covering_alon_1993} is equivalent to $\ndeg(\AND_n) = n$, and the key lemma of \cite{balancing_alon_1998} essentially shows $\ndeg(\PARITY_n) \geq n/2$. At a deeper level, the relation between rational degree and degree governs the \emph{effectiveness} of a natural Nullstellensatz for the hypercube, in the sense of \cite{bounds_brownawell_1987,sharp_kollar_1988,nullstellensatz_alon_1999,effective_jelonek_2005}. We give more details in \cref{sec:nullstellensatz}.
    
    \item In complexity theory, $\ndeg(f)$ equals the \emph{degree of the set} $f^{-1}(0)$ as defined by Smolensky \cite{lowdegree_smolensky_1993}, and equals the \emph{one-sided $0$-approximate degree} of $f$, up to a factor of $2$, as defined by Sherstov \cite{breaking_sherstov_2018}. Polynomially relating rational degree to degree also implies  
    $\mathbf{P} = (\text{C}_{=}\mathbf{P} \cap \text{co-C}_{=}\mathbf{P})$ with respect to generic oracles \cite{oracle_fortnow_2003}. This originally motivated Fortnow to pose his open problem \cite{email_fortnow_2026},  which he has described as one of his ``favorite and most frustrating'' \cite{blog_fortnow_2003}. 
\end{enumerate}

In this work, we resolve the open problem by proving $\deg(f) \leq \tO(\rdeg(f)^3)$ for every Boolean function $f$. In fact, we prove $\deg(f) \leq \tO(\sdeg(f)^2\rdeg(f))$, which is stronger.

Before presenting our strongest results in \cref{sec:improved_lower_bound}, we establish $\deg(f) \leq 16 \rdeg(f)^4$ 
in \cref{sec:poly_lower_bound}. While weaker, this result already resolves Fortnow's open problem and serves as a gentle introduction to the techniques used later. 
We end by discussing a variety of implications and open problems in \cref{sec:implications}.

\section{Preliminaries}
\label{sec:prelims}

This section presents the main definitions and tools used in our proof. For more context, we refer the reader to de Wolf's thesis \cite{thesis_dewolf_2001} and the survey of Buhrman and de Wolf \cite{survey_buhrman_2002}.

For a positive integer $n$, we write $[n]$ for the set $\{1,\dots, n\}$. We use the symbol $\bigsqcup$ for disjoint union of sets. We employ standard notation for polynomial rings and their fraction fields. For $p\in \mathbb{R}[X_1,\dots,X_n]$, we write $\deg(p)$ for the degree of $p$. If $p$ is of the form $\sum_{S\subseteq [n]} c_S\cdot \prod_{i\in S}X_i$, then we say it is multilinear. In this case, the degree of $p$ is equal to $\max\{\abs{S} \colon c_S \neq 0\}$. For $x\in \{0,1\}^n$, we write $\abs{x}$ for Hamming weight of $x$, that is, the number of $1$s in $x$.

\subsection{Boolean functions}

A Boolean function is a function of the form $f\colon \{0,1\}^n \to \{0,1\}$, where $n$ is a positive integer. The degree of $f$ is defined using the following well-known fact, see, e.g., \cite[Lemma 1]{survey_buhrman_2002}.

\begin{fact}[Multilinear representation]
   For every $f\colon \{0,1\}^n \to \mathbb{R}$, there exists a unique multilinear $p\in \mathbb{R}[X_1, \dots, X_n]$ such that $p(x) = f(x)$ for all $x\in \{0,1\}^n$.
\end{fact}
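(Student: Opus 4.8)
The plan is to prove existence by an explicit construction and uniqueness by a dimension count. For existence, for each point $a\in\{0,1\}^n$ let $\delta_a \coloneqq \prod_{i\colon a_i=1}X_i\cdot\prod_{i\colon a_i=0}(1-X_i)$. Each $\delta_a$ is a product of distinct linear forms in distinct variables, hence multilinear, and evaluates to $1$ at $x=a$ and to $0$ at every other $x\in\{0,1\}^n$. Since the multilinear polynomials are closed under addition, $p\coloneqq\sum_{a\in\{0,1\}^n}f(a)\,\delta_a$ is again multilinear, and $p(x)=f(x)$ for all $x\in\{0,1\}^n$ by construction.

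For uniqueness, it suffices to show that the zero polynomial is the only multilinear polynomial vanishing identically on $\{0,1\}^n$: then for any two representatives $p_1,p_2$ of $f$, the difference $p_1-p_2$ is multilinear and vanishes on the hypercube, forcing $p_1=p_2$ as polynomials. I would argue this via linear algebra. The $\mathbb{R}$-vector space of multilinear polynomials in $X_1,\dots,X_n$ has the monomials $\{\prod_{i\in S}X_i\colon S\subseteq[n]\}$ as a basis, hence dimension $2^n$, matching the dimension of the space of functions $\{0,1\}^n\to\mathbb{R}$. The evaluation map from the former to the latter is linear, and the existence part shows it is surjective; a surjection between finite-dimensional spaces of equal dimension is a bijection, which yields existence and uniqueness simultaneously. (Alternatively, uniqueness follows by induction on $n$: writing a vanishing multilinear $p$ as $X_n q_1 + q_0$ with $q_0,q_1$ not involving $X_n$, the substitution $X_n=0$ together with the inductive hypothesis kills $q_0$, and then $X_n=1$ kills $q_1$.)

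There is no real obstacle here, since the statement is classical. The only steps worth a moment's attention are confirming that the explicit representative is genuinely multilinear — immediate, as it is a finite sum of products of distinct $X_i$ and $1-X_i$ — and observing that the two spaces have the same dimension, so that surjectivity of the evaluation map upgrades to bijectivity.
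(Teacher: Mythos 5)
The paper does not actually prove this fact; it labels it as well-known and cites it to a survey of Buhrman and de Wolf, so there is no in-text argument for you to match. That said, your proof is correct and complete, and it is in fact the standard argument: the interpolating polynomials $\delta_a$ give existence, and the $2^n$-to-$2^n$ dimension count for the linear evaluation map (or, equivalently, the induction you sketch, which shows the evaluation map has trivial kernel) gives uniqueness. The only point worth making explicit, if you wanted to be maximally pedantic, is that the monomials $\prod_{i\in S}X_i$ over $S\subseteq[n]$ are linearly independent in $\mathbb{R}[X_1,\dots,X_n]$ so that the multilinear space really does have dimension exactly $2^n$; this is immediate since distinct monomials are linearly independent in any polynomial ring, but you lean on it when you say the evaluation map is ``a surjection between finite-dimensional spaces of equal dimension.'' No gap otherwise.
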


\begin{definition}[Degree]
    The \emph{degree} of $f\colon\{0,1\}^n\to \{0,1\}$, denoted $\deg(f)$, is defined to be the degree of the unique multilinear $p\in \mathbb{R}[X_1, \dots, X_n]$ such that $p(x) = f(x)$ for all $x\in \{0,1\}^n$.
\end{definition}

We now give the definitions of rational degree, sign degree and nondeterministic degree.

\begin{definition}[Rational degree]
    We say that $p/q\in \mathbb{R}(X_1,\dots,X_n)$, where $p,q$ are multilinear, is a \emph{rational representation} of $f\colon \{0,1\}^n \to \{0,1\}$ if for all $x\in \{0,1\}^n$, $q(x)\neq 0$ and $p(x)/q(x)=f(x)$. The \emph{rational degree} of $f$, denoted $\rdeg(f)$, is the minimum value of $\max(\deg(p),\deg(q))$ over all rational representations of $f$.
\end{definition}

\begin{definition}[Sign degree]
    We say that $p\in \mathbb{R}[X_1,\dots,X_n]$, where $p$ is multilinear, is a \emph{sign representation} of $f\colon \{0,1\}^n \to \{0,1\}$ if for all $x\in \{0,1\}^n$, $p(x)<0$ if and only if $f(x)=1$. The \emph{sign degree} of $f$, denoted $\sdeg(f)$, is the minimum value of $\deg(p)$ over all sign representations of $f$.
\end{definition}

\begin{definition}[Nondeterministic degree]
    We say that $p\in \mathbb{R}[X_1,\dots,X_n]$, where $p$ is multilinear, is a \emph{nondeterministic representation} of $f\colon \{0,1\}^n \to \{0,1\}$ if for all $x\in \{0,1\}^n$, $p(x)\neq 0$ if and only if $f(x)=1$. The \emph{nondeterministic degree} of $f$, denoted $\ndeg(f)$, is the minimum value of $\deg(p)$ over all nondeterministic representations of $f$.
\end{definition}

We record the following fact relating rational degree to sign degree mentioned in the introduction.

\begin{fact}\label{fact:sdeg_rdeg}
    For every Boolean function $f$, $\sdeg(f)/2 \leq \rdeg(f)$.\footnote{The factor of $2$ loss is unavoidable: for even $n$, $\sdeg(\PARITY_n) = n$ but $\rdeg(\PARITY_n) = n/2$.}
\end{fact}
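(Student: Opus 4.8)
The plan is to start from an optimal rational representation and convert it into a sign representation, losing only a factor of $2$ in the degree. Fix a rational representation $p/q$ of $f$ with $\max(\deg(p),\deg(q)) = \rdeg(f) =: d$. The key observation is that the identity $f = p/q$ on $\{0,1\}^n$ pins down $p$ in terms of $q$ and $f$: since $q(x)\neq 0$, we have $p(x) = 0$ whenever $f(x) = 0$, and $p(x) = q(x)$ whenever $f(x) = 1$. So the sign of $f$ should be recoverable from a suitable product of $p$ and $q$.

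Concretely, I would consider the polynomial $r := (q - 2p)\cdot q$. Evaluating on $x \in \{0,1\}^n$: if $f(x) = 0$ then $p(x) = 0$, so $r(x) = q(x)^2 > 0$; if $f(x) = 1$ then $p(x) = q(x)$, so $r(x) = -q(x)^2 < 0$. Hence $r$ is nowhere zero on the cube and $r(x) < 0$ exactly when $f(x) = 1$, i.e. $r$ is a sign representation of $f$, of degree at most $\deg(q-2p) + \deg(q) \le 2d$. The shift by $q$ inside the first factor is what sends the two Boolean values of $f$ to opposite signs, while the extra factor $q$ guarantees non-vanishing without disturbing the sign.

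The remaining step is cosmetic: the definition of sign degree asks for a \emph{multilinear} witness, whereas $r$ need not be multilinear. I would simply replace every power $X_i^k$ with $k \ge 1$ by $X_i$; this leaves the values of $r$ on $\{0,1\}^n$ unchanged and cannot increase the degree, producing a multilinear sign representation of $f$ of degree $\le 2d$. Therefore $\sdeg(f) \le 2\rdeg(f)$, which is the claim.

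I do not expect a genuine obstacle here: the whole content is the choice of the combination $(q-2p)\cdot q$, after which multilinearization finishes the argument; the only thing to be careful about is matching the sign convention in the definition of $\sdeg$ (negative on $f^{-1}(1)$). The sharpness remark in the footnote is an independent check rather than part of the proof: $\sdeg(\PARITY_n) = n$ for even $n$, while the standard rational representation of $\PARITY_n$ has degree $n/2$, so the constant $2$ cannot be improved.
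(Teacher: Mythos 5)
Your argument is correct, and it is a mild variant of what the paper has in mind. The paper states this fact without a written proof; the intended argument is signalled in the introduction by the phrase ``by squaring and shifting $p$'': since $p(x) = f(x)\,q(x)$ on the cube, the polynomial $p^2$ vanishes on $f^{-1}(0)$ and equals $q^2>0$ on $f^{-1}(1)$, so $c - p^2$ is a sign representation for any $0<c<\min_x q(x)^2$, of degree at most $2\deg(p)\le 2\rdeg(f)$. You instead take $r=(q-2p)q=q^2-2pq$, which on the cube equals $q(x)^2>0$ when $f(x)=0$ and $-q(x)^2<0$ when $f(x)=1$. Both are one-line algebraic constructions of the same flavour and give the same $2\rdeg(f)$ bound. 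What your version buys is that it is manifestly well-defined with no side condition: you use $q^2$ itself as the ``shift,'' so there is no constant $c$ to choose and no appeal to a positive minimum of $q^2$ over the finite cube. What the paper's version buys is the marginally finer bound $\sdeg(f)\le 2\deg(p)$ (which can be smaller than $2\max(\deg p,\deg q)$ if $\deg p<\deg q$), though this refinement is not needed anywhere. Your multilinearization step and the sign-convention check are both handled correctly.
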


Our proof will use the following characterization of rational degree in terms of nondeterministic degrees. The result is folklore and we give a proof for completeness.

\begin{fact}\label{fact:rdeg_ndeg}
    For every Boolean function $f$, $\rdeg(f)=\max(\ndeg(f),\ndeg(\neg f))$.
\end{fact}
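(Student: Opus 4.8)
The plan is to establish the two inequalities separately, both by explicit pointwise constructions.

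For $\rdeg(f) \ge \max(\ndeg(f),\ndeg(\neg f))$, I would begin with an optimal rational representation $p/q$ of $f$, so that $p,q$ are multilinear, $q(x) \ne 0$ for all $x \in \{0,1\}^n$, and $p(x)/q(x) = f(x)$, with $\max(\deg p, \deg q) = \rdeg(f)$. The first observation is that the numerator $p$ is by itself a nondeterministic representation of $f$: clearing denominators gives $p(x) = q(x)\,f(x)$ for all $x$, so $p(x) \ne 0$ precisely when $f(x) = 1$. This yields $\ndeg(f) \le \deg p \le \rdeg(f)$. Symmetrically, $p - q$ is multilinear and $(p-q)(x) = q(x)\bigl(f(x)-1\bigr)$, which is nonzero precisely when $f(x) = 0$, i.e.\ when $(\neg f)(x) = 1$; hence $p - q$ is a nondeterministic representation of $\neg f$, giving $\ndeg(\neg f) \le \max(\deg p, \deg q) = \rdeg(f)$.

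For the reverse inequality $\rdeg(f) \le \max(\ndeg(f),\ndeg(\neg f))$, let $a$ be an optimal nondeterministic representation of $f$ and $b$ an optimal nondeterministic representation of $\neg f$; thus $a,b$ are multilinear, $a(x) \ne 0 \iff f(x) = 1$, and $b(x) \ne 0 \iff f(x) = 0$. The key point is that the supports of $a$ and $b$ partition $\{0,1\}^n$, so $a + b$ never vanishes on the cube: if $f(x)=1$ then $(a+b)(x) = a(x) \ne 0$, and if $f(x)=0$ then $(a+b)(x) = b(x) \ne 0$. Consequently $a/(a+b)$ is a rational representation, and it computes $f$, since it equals $a(x)/a(x) = 1$ when $f(x)=1$ and $0/b(x) = 0$ when $f(x)=0$. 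As $a$ and $a+b$ are multilinear of degree at most $\max(\ndeg(f),\ndeg(\neg f))$, this gives $\rdeg(f) \le \max(\ndeg(f),\ndeg(\neg f))$, completing the proof.

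I do not expect a genuine obstacle here; the argument is a short pointwise manipulation rather than anything requiring a clever idea. The only points needing a moment's care are verifying that $a + b$ is nonzero at every point of $\{0,1\}^n$ — which relies precisely on the fact that $f$ and $\neg f$ have complementary, hence disjoint, $1$-sets — and observing that multilinearity is preserved throughout, since sums and differences of multilinear polynomials are multilinear and the definitions of rational and nondeterministic representation already build in multilinearity.
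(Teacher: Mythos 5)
Your argument is correct and essentially identical to the paper's: both directions use the same constructions, with $p$ and $p-q$ (the paper uses $q-p$, a cosmetic sign difference) as nondeterministic representations extracted from a rational one, and $a/(a+b)$ as the rational representation built from two nondeterministic ones. No gap, no meaningful divergence from the paper.
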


\begin{proof}
    Let $p/q$ be a rational representation of $f$ with $\max(\deg(p),\deg(q))=\rdeg(f)$. Then, $p$ and $q-p$ are nondeterministic representations of $f$ and $\neg f$ respectively. Thus,
    \begin{equation*}
        \max(\ndeg(f),\ndeg(\neg f)) \leq \max(\deg(p),\deg(q-p)) \leq \max(\deg(p),\max(\deg(p),\deg(q))) =\rdeg(f).
    \end{equation*}
    
    Conversely, let $p$ and $q$ be nondeterministic representations of $f$ and $\neg f$ respectively, with $\deg(p)=\ndeg(f)$ and $\deg(q)=\ndeg(\neg f)$. Then $p/(p+q)$ is a rational representation of $f$. Thus,
    \begin{equation*}
        \rdeg(f)\leq\max(\deg(p),\deg(p+q))\leq \max(\deg(p),\max(\deg(p),\deg(q)))=\max(\ndeg(f),\ndeg(\neg f)).
    \end{equation*}
    
    Taken together, we obtain $\rdeg(f)=\max(\ndeg(f),\ndeg(\neg f))$, as required.
\end{proof}

We also record the fact that rational degree exactly equals the $\epsilon$-approximate postselected quantum query complexity $\PostQ_\epsilon$ with $\epsilon = 0$ as defined in \cite{postqe_mahadev_2015}. We will not use this fact later but it serves to underscore the tight connection between rational degree and quantum complexity, a connection first observed by Aaronson \cite{postselection_aaronson_2005}.

\begin{restatable}{fact}{rdegpostqe}
    For every Boolean function $f$, $\rdeg(f) = \PostQ_0(f)$.
\end{restatable}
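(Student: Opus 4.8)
The plan is to prove the two inequalities $\PostQ_0(f)\le\rdeg(f)$ and $\rdeg(f)\le\PostQ_0(f)$ separately, using the characterization $\rdeg(f)=\max(\ndeg(f),\ndeg(\neg f))$ of \cref{fact:rdeg_ndeg} throughout. I will freely use the polynomial method (see \cite{survey_buhrman_2002}): in a $t$-query quantum algorithm, every amplitude of the final state, as a function of the input $x\in\{0,1\}^n$, agrees on the cube with a multilinear polynomial of degree at most $t$. I take $\PostQ_0(f)$ to be realized by a $t$-query algorithm acting on a ``postselection'' register $P$, an ``output'' bit $O$, and a workspace $W$, such that on every input $x$ the final state $\ket{\psi_x}=\sum_{b,c,w}\alpha_{b,c,w}(x)\,\ket{b}_P\ket{c}_O\ket{w}_W$ has $\Pr[P=\mathrm{succ}]>0$ and $\Pr[O=1\mid P=\mathrm{succ}]=f(x)$.

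For $\PostQ_0(f)\le\rdeg(f)$: set $d=\rdeg(f)$ and, via \cref{fact:rdeg_ndeg}, take multilinear $a,b$ of degree at most $d$ that are nondeterministic representations of $f$ and $\neg f$, so that for every $x$ exactly one of $a(x),b(x)$ is nonzero and $a(x)\neq 0\iff f(x)=1$. I would then run the standard linear-combination-of-unitaries construction — write $a$ and $b$ as signed sums of $(\le d)$-juntas, load the coefficient magnitudes into a ``prepare'' register, compute the relevant junta of $x$ within the $d$-query budget (padding each junta to exactly $d$ positions), and unprepare — to obtain a $d$-query algorithm whose final state has amplitude $a(x)/C$ on $\ket{\mathrm{succ}}_P\ket{1}_O\ket{0}_W$, amplitude $b(x)/C$ on $\ket{\mathrm{succ}}_P\ket{0}_O\ket{0}_W$, and lies in the orthogonal complement of $\ket{\mathrm{succ}}_P$ otherwise, where $C>0$ depends only on $a,b$. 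Postselecting on $P=\mathrm{succ}$, the conditional acceptance probability is $|a(x)|^2/(|a(x)|^2+|b(x)|^2)\in\{0,1\}$ and equals $f(x)$ since one of $a(x),b(x)$ always vanishes, while $\Pr[P=\mathrm{succ}]=(|a(x)|^2+|b(x)|^2)/C^2>0$. Hence $\PostQ_0(f)\le d$.

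For $\rdeg(f)\le\PostQ_0(f)$: let $t=\PostQ_0(f)$ with an algorithm as above. From $\Pr[O=1\wedge P=\mathrm{succ}]=f(x)\Pr[P=\mathrm{succ}]$ and $\Pr[P=\mathrm{succ}]>0$, one gets $\sum_w|\alpha_{\mathrm{succ},1,w}(x)|^2\neq 0\iff f(x)=1$. The main obstacle is here: this sum of squared magnitudes is naively a polynomial of degree $2t$, which would only yield $\rdeg(f)\le 2\PostQ_0(f)$. I would get around it by a genericity argument. Writing $\alpha_{\mathrm{succ},1,w}=g_w+ih_w$ with $g_w,h_w$ real multilinear of degree at most $t$, the real vector $\big(g_w(x),h_w(x)\big)_w$ is nonzero exactly when $f(x)=1$; so for each of the finitely many $x$ with $f(x)=1$, the condition $\sum_w\big(\lambda_w g_w(x)+\mu_w h_w(x)\big)=0$ cuts out a measure-zero hyperplane in the coefficient space, and choosing $(\lambda_w,\mu_w)_w$ outside all of them makes $r:=\sum_w(\lambda_w g_w+\mu_w h_w)$ a degree-$(\le t)$ nondeterministic representation of $f$, whence $\ndeg(f)\le t$. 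The identical argument applied to the amplitudes $\alpha_{\mathrm{succ},0,w}$ gives $\ndeg(\neg f)\le t$, and \cref{fact:rdeg_ndeg} then gives $\rdeg(f)\le t$. Combining the two inequalities yields $\rdeg(f)=\PostQ_0(f)$. Besides the factor-of-two point just noted, the only things to check are the bookkeeping of the linear-combination-of-unitaries construction (routine) and that finitely many ``bad'' hyperplanes are excludable, which holds because the input cube is finite.
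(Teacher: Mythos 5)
Your proof is correct and its central strategy for the nontrivial direction $\rdeg(f)\le\PostQ_0(f)$ is the same as the paper's: apply the polynomial method to the state just before postselection, split the amplitudes into real and imaginary parts, and observe that the obvious sum-of-squares bound would lose a factor of two, which is then avoided by passing to a single real \emph{linear} combination that is still nonzero exactly where $f$ is. The only substantive difference is the tool used to justify that such a linear combination exists: the paper invokes the constructive avoidance lemma of Iyer, Rajgopal, and Suruliz (\cref{lem:avoidance_lemma}, which outputs explicit positive coefficients), whereas you use the elementary nonconstructive fact that finitely many proper hyperplanes cannot cover $\mathbb{R}^N$, so a generic coefficient vector works. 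Both arguments are valid and achieve the same thing; yours is slightly more self-contained and makes the ``genericity'' viewpoint explicit, while the paper's keeps its presentation aligned with the source it reuses \cref{lem:avoidance_lemma} from. For the other direction $\PostQ_0(f)\le\rdeg(f)$ you sketch the linear-combination-of-unitaries construction directly, whereas the paper simply cites \cite[Theorem~2]{postqe_mahadev_2015}; this is the easier direction and your sketch, while hand-wavy on the prepare/unprepare bookkeeping, is the standard route.
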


\begin{proof}[Proof sketch]
    \cite{postqe_mahadev_2015} proves $\PostQ_\epsilon \leq \rdeg(f) \leq 2 \PostQ_\epsilon(f)$ for all $\epsilon\in [0,1/2)$. When $\epsilon=0$, the factor of $2$ can be removed by extending the argument in \cite{postqe_mahadev_2015}; see \cref{app:rdeg_postqe} for details.
\end{proof}

Our proof will relate rational degree to degree via the following three complexity measures.

\begin{definition}[Block sensitivity]
    Let $f\colon \{0,1\}^n \to \{0,1\}$. Let $x\in \{0,1\}^n$. We say that a subset $B\subseteq [n]$ is a \emph{sensitive block} of $f$ at $x$ if $f(x)\neq f(x^B)$, where $x^B$ denotes $x$ with all bits in $B$ flipped. The \emph{block sensitivity} of $f$ at $x$, denoted $\bs_x(f)$, is the maximum number of disjoint sensitive blocks of $f$ at $x$.
\end{definition}

\begin{definition}[Hitting set]
    For a multilinear polynomial $p \coloneqq \sum_{S\subseteq[n]} c_S \cdot \prod_{i\in S} X_i\in \mathbb{R}[X_1,\dots,X_n]$, its \emph{set of maximal monomials} is the set $\calM(p) \coloneqq \{M \subseteq [n] \colon  \text{$c_M\neq 0$, $c_{M'}= 0$ for all $M'\supsetneq M$}\}$.\footnote{Note that this is not the same as $\calM'(p) \coloneqq \{M \subseteq [n] \colon  \text{$c_M\neq 0$ and $\deg(p) = \abs{M}$}\}$. While $\calM'(p)$ is always contained in $\calM(p)$, the containment could be strict. This distinction matters for \cref{sec:improved_lower_bound}.} We say that $H\subseteq [n]$ is a \emph{hitting set} of $p$ if $H \cap M$ is nonempty for all $M \in \calM(p)$.
\end{definition}

\begin{definition}[Decision tree complexity]
    The \emph{decision tree complexity} of $f\colon \{0,1\}^n \to \{0,1\}$, denoted $\D(f)$, is the minimum depth of a decision tree (deterministic query algorithm) that, for all $x\in \{0,1\}^n$, queries bits of $x$ to exactly compute $f(x)$ at a leaf.
\end{definition}

\subsection{Technical tools}

Our proof involves using polynomial symmetrization followed by Markov's inequality \cite{question_markov_1890}. The symmetrization we use is folklore, see, e.g., \cite[Lemma 12]{laurent_aaronson_2020} where it is attributed to \cite{apx_shi_2002}, and follows immediately from basic properties of expectation.

\begin{fact}\label{fact:bernoulli_symmetrization}
Given multilinear $p\in \mathbb{R}[X_1,\dots,X_n]$, define $P\in \mathbb{R}[Y]$ by replacing every monomial $\prod_{i\in S} X_i$ appearing in $p$ by $Y^{\abs{S}}$. For $y \in [0,1]$, let $\ber_{y}^n$ denote the distribution over $\{0,1\}^n$ where each bit is sampled independently to be $1$ with probability $y$. Then, for all $y \in [0,1]$,
\begin{equation}
    P(y) = \expect_{x \sim \ber_y^n}[ \, p(x)  \, ].
\end{equation}
Furthermore, $\deg(P) \leq \deg(p)$.
\end{fact}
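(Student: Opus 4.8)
The plan is to verify the identity one monomial at a time and then assemble it by linearity of expectation; the degree bound falls out of the construction itself, so there is essentially nothing to do beyond unwinding definitions.

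First I would write $p = \sum_{S\subseteq[n]} c_S \prod_{i\in S} X_i$ in multilinear form. Both the substitution $\prod_{i\in S}X_i \mapsto Y^{\abs{S}}$ defining $P$ and the claimed identity are linear in $p$, so it suffices to treat a single monomial $\prod_{i\in S} X_i$. Fixing $y\in[0,1]$ — which is exactly the range in which $\ber_y^n$ is a valid probability distribution — and a set $S\subseteq[n]$, I would compute $\expect_{x\sim\ber_y^n}\bigl[\prod_{i\in S} x_i\bigr]$. Under $\ber_y^n$ the coordinates $x_1,\dots,x_n$ are mutually independent, so the expectation of the product is the product of the expectations $\prod_{i\in S}\expect[x_i]$; and since each $x_i$ is $\{0,1\}$-valued with $\Pr[x_i=1]=y$, we get $\expect[x_i]=y$, hence $\expect_{x\sim\ber_y^n}\bigl[\prod_{i\in S}x_i\bigr]=y^{\abs{S}}$. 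This is precisely the image of $\prod_{i\in S}X_i$ under the substitution. Multiplying by $c_S$ and summing over $S$, linearity of expectation gives $\expect_{x\sim\ber_y^n}[p(x)] = \sum_{S} c_S\, y^{\abs{S}} = P(y)$, as claimed.

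For the furthermore, every monomial $\prod_{i\in S}X_i$ with $c_S\neq 0$ has $\abs{S}\leq\deg(p)$ and is sent to $Y^{\abs{S}}$, so every monomial of $P$ has degree at most $\deg(p)$, whence $\deg(P)\leq\deg(p)$ (the case $p\equiv 0$ being trivial under the usual conventions). I do not expect any genuine obstacle: the statement is an immediate consequence of independence of the coordinates together with linearity of expectation. The only points that warrant a word of care are restricting to $y\in[0,1]$ so that $\ber_y^n$ is well defined, and — if one does not start from the multilinear representation — first reducing $p$ to multilinear form on $\{0,1\}^n$ so that its monomials are squarefree, as the substitution implicitly assumes.
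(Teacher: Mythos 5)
Your proof is correct and is exactly the argument the paper has in mind (the paper itself only remarks that the fact "follows immediately from basic properties of expectation" and cites it as folklore): reduce to a single squarefree monomial by linearity, use independence of the coordinates to get $\expect[\prod_{i\in S}x_i]=y^{|S|}$, and read off the degree bound from the substitution. Nothing further is needed.
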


\begin{theorem}[Markov]\label{thm:markov}
    Let $P \in \mathbb{R}[X]$. Let $a_1,a_2,b_1,b_2 \in \mathbb{R}$ be such that $a_1 < a_2$ and $b_1 < b_2$. Suppose $P(x) \in [b_1,b_2]$ for all $x\in [a_1,a_2]$. Then, for all $x\in [a_1,a_2]$,
    \begin{equation}
        \abs{P'(x)}~\leq~\frac{b_2-b_1}{a_2-a_1}\cdot \deg(P)^2.
    \end{equation}
\end{theorem}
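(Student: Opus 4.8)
The plan is to reduce the stated (affine-invariant) inequality to the classical Markov brothers' inequality by a change of coordinates, after which the remaining content is a well-known theorem that one simply cites.

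\textbf{Step 1 (normalization).} Given $P$ as in the statement, write $n = \deg(P)$ and define $Q \in \mathbb{R}[X]$ by
\[
Q(X) \;\coloneqq\; \frac{2}{b_2-b_1}\left( P\!\left( \frac{a_2-a_1}{2}\,X + \frac{a_1+a_2}{2} \right) - \frac{b_1+b_2}{2} \right).
\]
The substitution $X\mapsto \frac{a_2-a_1}{2}X+\frac{a_1+a_2}{2}$ maps $[-1,1]$ onto $[a_1,a_2]$, and $y\mapsto \frac{2}{b_2-b_1}(y-\frac{b_1+b_2}{2})$ maps $[b_1,b_2]$ onto $[-1,1]$, so $Q(X)\in[-1,1]$ for every $X\in[-1,1]$. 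An affine substitution of the variable leaves the degree unchanged, as does multiplication by a nonzero scalar, so $\deg(Q)=n$; and the chain rule gives $Q'(X)=\frac{a_2-a_1}{b_2-b_1}\,P'\!\big(\frac{a_2-a_1}{2}X+\frac{a_1+a_2}{2}\big)$, whence $\max_{x\in[a_1,a_2]}\abs{P'(x)}=\frac{b_2-b_1}{a_2-a_1}\max_{X\in[-1,1]}\abs{Q'(X)}$. It therefore suffices to prove the normalized statement: if $Q\in\mathbb{R}[X]$ has degree $n$ and $\abs{Q(X)}\le1$ for all $X\in[-1,1]$, then $\abs{Q'(X)}\le n^2$ for all $X\in[-1,1]$; feeding this back through the displayed identity gives $\abs{P'(x)}\le\frac{b_2-b_1}{a_2-a_1}\deg(P)^2$ for every $x\in[a_1,a_2]$, as required.

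\textbf{Step 2 (the normalized inequality).} This is precisely the Markov brothers' inequality \cite{question_markov_1890}; the constant $n^2$ is sharp and attained by the Chebyshev polynomial $T_n$, which satisfies $\abs{T_n(X)}\le1$ on $[-1,1]$ and $\abs{T_n'(\pm1)}=n^2$. I would invoke this classical theorem directly. If a self-contained argument is preferred, the standard route is to compare $Q$ with $T_n$: for $\abs{X}\le\cos(\pi/n)$, Bernstein's inequality $\abs{Q'(X)}\le n/\sqrt{1-X^2}$ together with $\sin(\pi/n)\ge1/n$ already gives $\abs{Q'(X)}\le n^2$; near an endpoint one assumes $\abs{Q'(X_0)}$ exceeds the extremal value, forms $R\coloneqq\lambda T_n-Q$ with a suitable scalar satisfying $\abs{\lambda}>1$, notes that $R$ then strictly alternates in sign at the $n+1$ extrema $\cos(k\pi/n)$ of $T_n$ and hence has $n$ zeros in $(-1,1)$, and obtains a contradiction with $\deg R\le n$ from the extra vanishing of $R'$ at $X_0$ (the possibility $R\equiv0$ being ruled out, as it would force $Q=\lambda T_n$ with $\abs{\lambda}>1$). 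Either way, $\abs{Q'(X)}\le n^2$ on all of $[-1,1]$.

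\textbf{Main obstacle.} There is essentially none: the normalized statement is a classical theorem, so in practice one performs the reduction of Step 1 and cites it. The only point that requires care is the endpoint case of Step 2 — correctly tracking the strict sign alternation of $R=\lambda T_n-Q$ at the Chebyshev extrema and verifying that $R'(X_0)=0$ genuinely forces a zero of $R$ in excess of the degree bound; this is handled, for instance, by the classical Rolle-type counting argument, or one can run the comparison with $T_n$ uniformly over $[-1,1]$ and dispense with Bernstein's inequality altogether.
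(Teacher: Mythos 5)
Your proposal is correct; the paper itself gives no proof of this theorem, simply stating it and citing Markov's 1890 paper, which is exactly what you do after spelling out the routine affine normalization to $[-1,1]\to[-1,1]$. Your Step 1 reduction and the sketched Chebyshev comparison in Step 2 are both standard and sound.
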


\begin{corollary}\label{cor:approx}
    Let $p \in \mathbb{R}[X_1,\dots,X_n]$ and $h>0$. Suppose that $p$ has the following properties:
    \begin{enumerate}[label=(\roman*)]
        \item $\abs{p(x)}\leq h$ for all $x\in \{0,1\}^n$,
        \item $\abs{p(0^n)}=h$,
        \item $p(x) \cdot p(0^n) \leq 0$ for all $x \in \{0,1\}^n$ with $\abs{x} = 1$.
    \end{enumerate}
    Then,
    \begin{equation}
        \sqrt{n/2}\leq \deg(p).
    \end{equation}
\end{corollary}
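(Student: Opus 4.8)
The plan is to reduce the multivariate statement to a one-variable one by Bernoulli symmetrization (\cref{fact:bernoulli_symmetrization}), and then extract the bound from Markov's inequality (\cref{thm:markov}) applied at the \emph{endpoint} $y=0$ of the interval $[0,1]$.

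First I would normalize: replacing $p$ by $-p$ changes none of the hypotheses (i)--(iii) nor the conclusion, so I may assume $p(0^n)=h>0$. Let $P\in\mathbb{R}[Y]$ be the polynomial obtained from $p$ as in \cref{fact:bernoulli_symmetrization}, so $P(y)=\expect_{x\sim\ber_y^n}[p(x)]$ for $y\in[0,1]$ and $\deg(P)\le\deg(p)$. Since $P(y)$ is an average of the values $p(x)$, hypothesis (i) gives $P(y)\in[-h,h]$ for all $y\in[0,1]$, and hypothesis (ii) gives $P(0)=p(0^n)=h$.

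Next I would compute the linear coefficient of $P$. Writing $P(y)=\sum_{x\in\{0,1\}^n}y^{|x|}(1-y)^{n-|x|}p(x)$ and differentiating at $y=0$, only the terms with $|x|\le 1$ survive, yielding $P'(0)=\sum_{x:\,|x|=1}p(x)-n\,p(0^n)$. By hypothesis (iii) together with $p(0^n)>0$, every summand $p(x)$ with $|x|=1$ is $\le 0$, so $P'(0)\le -n\,p(0^n)=-nh$, i.e.\ $|P'(0)|\ge nh$.

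Finally, \cref{thm:markov} with $[a_1,a_2]=[0,1]$ and $[b_1,b_2]=[-h,h]$ gives $|P'(y)|\le 2h\deg(P)^2$ for all $y\in[0,1]$; specializing to $y=0$ and dividing by $h>0$ gives $n\le 2\deg(P)^2\le 2\deg(p)^2$, hence $\sqrt{n/2}\le\deg(p)$. The calculation is short, so the only real design choice — and the step I expect a first attempt to miss — is to invoke Markov at the endpoint $0$ rather than at an interior point: hypotheses (ii) and (iii) jointly force a steep drop of the bounded polynomial $P$ right at $0$, and it is exactly this boundary steepness that Markov's inequality penalizes for low-degree polynomials bounded on $[0,1]$.
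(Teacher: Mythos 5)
Your proposal is correct and matches the paper's proof in all essentials: normalize so $p(0^n)=h>0$, Bernoulli-symmetrize to get a univariate $P$ bounded in $[-h,h]$ on $[0,1]$ with $P(0)=h$, show $P'(0)\le -nh$ using hypothesis (iii), and invoke Markov on $[0,1]$. The only cosmetic difference is bookkeeping: the paper reads off $P'(0)=a_1+\cdots+a_n$ from the multilinear coefficients of $p$ and bounds each $a_i\le -h$, whereas you differentiate the explicit formula $P(y)=\sum_x y^{|x|}(1-y)^{n-|x|}p(x)$ at $y=0$ to get $P'(0)=\sum_{|x|=1}p(x)-np(0^n)\le -nh$; these are identically the same quantity and the same estimate.
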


\begin{proof}
    There are two cases to consider: (1) $p(0^n) = h $ and $p(x) \leq 0$ for all $x \in\{0,1\}^n$ with $\abs{x} = 1$; (2) $p(0^n) = -h$ and $p(x) \geq 0$ for all $x \in \{0,1\}^n$ with $\abs{x} = 1$. We give the proof for the first case as the second case then follows from considering $(-p)$.
    We may assume that $p$ is multilinear without loss of generality and write
    \begin{equation}
        p = a_0+(a_1 X_1+\cdots+a_n X_n)+ \textnormal{(higher degree terms)}.
    \end{equation}
    By assumption $a_0=p(0^n)=h$. Evaluating $p$ at bitstrings of Hamming weight 1, we see that $a_0+a_i \leq 0$ for all $i \in [n]$. Hence, $a_i \leq - h$ for all $i \in [n]$. By \cref{fact:bernoulli_symmetrization}, we obtain $P\in \mathbb{R}[Y]$ with $\deg(P)\leq \deg(p)$, such that, for all $y\in [0,1]$,
    \begin{equation}
        \abs{P(y)}\leq \expect_{x \sim \ber_y^n}[ \, \abs{p(x)} \, ]\leq h.
    \end{equation}
    Moreover, we can write
    \begin{equation}
        P=a_0 + (a_1 + \cdots + a_n)Y + \textnormal{(higher degree terms)}.
    \end{equation}
    In particular, we see that $P'(0) = a_1 + \cdots + a_n \leq  -n\cdot h$. Now \cref{thm:markov} gives
    \begin{equation}
        n\cdot h \leq \abs{P'(0)}\leq \frac{h-(-h)}{1-0}\cdot \deg(P)^2=2h\deg(P)^2.
    \end{equation}
    Therefore, $\sqrt{n/2}\leq \deg(P)\leq \deg(p)$, as required.
\end{proof}

The next key lemma relates sensitive blocks of $f$ at $x$ with maximal monomials of a nondeterministic representation of $f$. It generalizes \cite[Lemma 5]{survey_buhrman_2002} (attributed to Nisan and Smolensky) to involve $x$, nondeterministic representation, and maximal monomials. The proof is the same.
\begin{lemma}[Nisan-Smolensky]\label{lem:nisan-smolensky}
    Let $f\colon \{0,1\}^n \to \{0,1\}$ be nonconstant and $x \in f^{-1}(0)$. Let $p$ be a nondeterministic representation of $f$ and $M$ be a maximal monomial of $p$. Then there exists $B\subseteq M$ such that $f(x^B) = 1$.
\end{lemma}
\begin{proof}
Let $q$ be the restriction of $p$ obtained by fixing all variables outside $M$ according to $x$. Now $M$ remains a maximal monomial of $q$. Therefore, $q$ is nonconstant (as a formal polynomial). Therefore, there exists $y \in \{0,1\}^M$ with $q(y) \neq 0$ by uniqueness of multilinear representation. Let $x' \in \{0,1\}^n$ be equal to $y$ on $M$ and be equal to $x$ outside $M$. Then $p(x') = q(y) \neq 0$, which implies $f(x') = 1$ as $p$ is a nondeterministic representation of $f$. By definition, $x'$ and $x$ can only differ on $M$, so $x'$ can be written as $x^B$ for some $B\subseteq M$.
\end{proof}

The next two theorems are used in \cref{sec:implications}. The first, Minsky-Papert \cite{perceptrons_minsky_1969} symmetrization, is used to show the impossibility of generalizing our main result to all partial Boolean functions. The second, a lemma of Nisan and Szegedy from \cite{degree_nisan_1994}, which they attribute to Schwartz \cite{poly_schwartz_1980}, is used to show a lower bound on the rational degree of Boolean functions depending on $n$ variables. For both theorems we give a concise proof for completeness.

\begin{theorem}[Minsky-Papert]\label{thm:symmetrization}
For every $p\in \mathbb{R}[X_1,\dots, X_n]$, there exists $P\in \mathbb{R}[X]$ such that $\deg(P) \leq \deg(p)$ and, for all $i \in \{0,1,\dots,n\}$,
\begin{equation}
    P(i) = \binom{n}{i}^{-1} \sum_{x \in \{0,1\}^n \colon \abs{x} = i} p(x).
\end{equation}
\end{theorem}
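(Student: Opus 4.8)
The plan is to prove the statement by linearity, reducing it to the case of a single monomial. First I would observe that, since the conclusion depends only on the values of $p$ on $\{0,1\}^n$, I may replace $p$ by its unique multilinear representation without increasing the degree; so assume $p = \sum_{S \subseteq [n]} c_S \prod_{i \in S} X_i$ and put $d := \deg(p) = \max\{\abs{S} : c_S \neq 0\}$. By linearity it then suffices to construct, for each $S \subseteq [n]$, a univariate polynomial $P_S$ with $\deg(P_S) \leq \abs{S}$ and
\[
P_S(i) \;=\; \binom{n}{i}^{-1} \sum_{x \in \{0,1\}^n \colon \abs{x} = i} \ \prod_{j \in S} x_j
\qquad \text{for all } i \in \{0,1,\dots,n\},
\]
and then take $P := \sum_S c_S P_S$, which has $\deg(P) \leq d = \deg(p)$ and the desired interpolation property.

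Next I would evaluate the inner sum combinatorially: $\sum_{x \colon \abs{x} = i} \prod_{j \in S} x_j$ counts the weight-$i$ strings that are $1$ on every coordinate in $S$, hence equals $\binom{n - \abs{S}}{\, i - \abs{S}\,}$, with the usual convention that this vanishes when $i < \abs{S}$. The crucial identity is
\[
\binom{n}{i}^{-1}\binom{n-k}{\, i-k \,} \;=\; \frac{\binom{i}{k}}{\binom{n}{k}}
\qquad (0 \leq k \leq n),
\]
a one-line manipulation of factorials. Reading its right-hand side with $\binom{i}{k}$ replaced by the degree-$k$ polynomial $\frac{1}{k!}\prod_{j=0}^{k-1}(z - j)$, and using $\binom{n}{k} \neq 0$ since $k \leq n$, this motivates the definition
\[
P_S(z) \;:=\; \binom{n}{\abs{S}}^{-1} \cdot \frac{1}{\abs{S}!}\prod_{j=0}^{\abs{S}-1}(z - j),
\]
a polynomial of degree exactly $\abs{S} \leq d$. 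By the identity, $P_S(i)$ equals the stated average for every $i \in \{0,1,\dots,n\}$, which finishes the construction.

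The only points requiring care — none of them a real obstacle — are: (i) justifying the initial multilinearization step via the multilinear representation fact; (ii) the degenerate range $i < \abs{S}$, where one checks that $\binom{n - \abs{S}}{i - \abs{S}} = 0$ is consistent with $\prod_{j=0}^{\abs{S}-1}(z-j)$ vanishing at $z = i$; and (iii) the binomial identity itself, which follows by writing all three coefficients as ratios of factorials and cancelling $(n-i)!$ and $k!$. The argument is short and entirely elementary; if one prefers, the same result follows by averaging $p$ over all coordinate permutations to obtain a symmetric multilinear polynomial and expanding it in the elementary symmetric polynomials, but the monomial-by-monomial route above avoids invoking that structure.
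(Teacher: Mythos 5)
Your proposal is correct and follows essentially the same approach as the paper's proof: reduce by linearity to a single (multilinear) monomial, compute $\sum_{|x|=i}\prod_{j\in S}x_j=\binom{n-|S|}{i-|S|}$, and observe that $\binom{n}{i}^{-1}\binom{n-|S|}{i-|S|}$ is a polynomial of degree $|S|$ in $i$ (your binomial-coefficient form is the same falling-factorial ratio the paper writes out). The only cosmetic difference is that you make the multilinearization step explicit, which is a harmless bit of extra care.
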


\begin{proof}
    Assume $p$ is multilinear without loss of generality. For $i\in \{0,1,\dots,n\}$, and $d$ distinct indices $j_1,\dots,j_d \in [n]$, we have
    \begin{align*}
    \binom{n}{i}^{-1}\sum_{x \in \{0,1\}^n \colon \abs{x} = i}x_{j_1} \cdots x_{j_d} 
    = \binom{n}{i}^{-1} \binom{n-d}{i-d} = \frac{i(i-1)\cdots(i-d+1)}{n(n-1)\cdots (n-d+1)},
    \end{align*}
    which is a polynomial in $i$ of degree $d$. (Note that when $i<d$, it is the zero polynomial.) Since $p$ is a linear combination of monomials, each of degree at most $\deg(p)$, the theorem follows.
\end{proof}

\begin{theorem}[{\cite[Lemma 2.6]{degree_nisan_1994}}]\label{thm:ns94}
    Let $p\in \mathbb{R}[X_1,\dots,X_n]$ be a nonzero multilinear polynomial. Then, for $x$ chosen uniformly at random from $\{0,1\}^n$, it holds that $\Pr[p(x)\neq 0]\geq 2^{-\deg(p)}$.
\end{theorem}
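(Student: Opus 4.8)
The plan is to prove the statement by induction on the number of variables $n$; this is the classical Schwartz--Zippel-type argument specialized to the Boolean cube. For the base case $n=0$, $p$ is a nonzero constant, so $\Pr[p(x)\neq 0]=1\geq 2^{0}=2^{-\deg(p)}$.

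For the inductive step, I would write $p = X_n\, q + r$ with $q,r\in\mathbb{R}[X_1,\dots,X_{n-1}]$ multilinear --- this decomposition exists and is unique because $p$ is multilinear --- and set $d\coloneqq\deg(p)$. Note that $\deg(r)\leq d$ and, whenever $q\not\equiv 0$, that $\deg(q)\leq d-1$ since $\deg(X_n q)=1+\deg(q)\leq d$. I then split into four exhaustive cases according to which of $q$, $r$, $q+r$ vanish identically. If $q\equiv 0$, then $p=r$ is a nonzero polynomial of degree $d$ not depending on $X_n$, so applying the inductive hypothesis to $r$ (a polynomial in $n-1$ variables) gives $\Pr[p(x)\neq 0]\geq 2^{-d}$ directly. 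If $q\not\equiv 0$ but $r\equiv 0$, then $p=X_n q$, so $p(x)\neq 0$ exactly when $x_n=1$ and $q(x_1,\dots,x_{n-1})\neq 0$; conditioning on $x_n$ and using the inductive hypothesis on $q$ gives $\Pr[p(x)\neq 0]=\tfrac12\Pr[q(x)\neq 0]\geq\tfrac12\cdot 2^{-(d-1)}=2^{-d}$. The case $q\not\equiv 0$, $q+r\equiv 0$ is identical after noting $p=(X_n-1)q$ and swapping the roles of $x_n=1$ and $x_n=0$. Finally, if $q$, $r$, and $q+r$ are all nonzero, then $p|_{X_n=0}=r$ and $p|_{X_n=1}=q+r$ are both nonzero multilinear polynomials in $n-1$ variables of degree at most $d$, so averaging over $x_n$ and applying the inductive hypothesis to each gives
\begin{equation*}
\Pr_{x\in\{0,1\}^n}[p(x)\neq 0]=\tfrac12\Pr_{x\in\{0,1\}^{n-1}}[r(x)\neq 0]+\tfrac12\Pr_{x\in\{0,1\}^{n-1}}[(q+r)(x)\neq 0]\geq\tfrac12\cdot 2^{-d}+\tfrac12\cdot 2^{-d}=2^{-d}.
\end{equation*}
This completes the induction.

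The argument is essentially routine; the only point that needs care is the bookkeeping that makes the bound come out to exactly $2^{-d}$ rather than something weaker. The factor $\tfrac12$ lost by conditioning on the single bit $x_n$ must be recovered --- either because the degree drops to $d-1$ in the degenerate cases where $p$ factors through $X_n$ or $X_n-1$, or because in the generic case both one-bit restrictions of $p$ are simultaneously nonzero, so the two contributions $\tfrac12\cdot 2^{-d}$ add back up to $2^{-d}$. Correctly isolating the degenerate cases $r\equiv 0$ and $q+r\equiv 0$ is thus the crux of organizing the proof; everything else is a direct application of the inductive hypothesis.
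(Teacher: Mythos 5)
Your proof is correct, and it's a genuinely different route from the one in the paper. You use the classical Schwartz--Zippel-style induction on $n$: decompose $p = X_n q + r$, handle the degenerate cases $q\equiv 0$, $r\equiv 0$, $q+r\equiv 0$ where $p$ factors or drops a variable, and in the generic case average over the two restrictions $p|_{X_n=0}=r$ and $p|_{X_n=1}=q+r$, both of which are nonzero. The bookkeeping is handled correctly: the lost factor of $\tfrac12$ is recovered either because the degree drops (in the degenerate cases) or because both restrictions contribute (in the generic case). The paper instead gives a non-inductive counting argument: fix a maxonomial $M$ of $p$; for each of the $2^{n-\deg(p)}$ assignments to the variables outside $M$, the restriction of $p$ still has nonzero coefficient on $\prod_{i\in M}X_i$, hence is a nonzero polynomial and (by uniqueness of multilinear representation) has at least one nonzero point among the $2^{\deg(p)}$ completions; this immediately gives $2^{n-\deg(p)}$ distinct nonzero points of $p$, i.e.\ $\Pr[p(x)\neq 0]\geq 2^{-\deg(p)}$. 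Your inductive argument is the one Nisan and Szegedy attribute to Schwartz and is perhaps the more standard one, while the paper's maxonomial argument is shorter and avoids the case analysis; the latter also has the advantage that its key idea (restricting the variables outside a maxonomial) is reused in the paper's proof of \cref{lem:hitset_upper}, which is why the authors chose to present it.
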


\begin{proof}We give an alternative proof of \cref{thm:ns94}. If $\deg(p)=0$, then $p$ must be a nonzero constant, so assume $\deg(p)>0$. Fix a monomial $M$ of $p$ with $\abs{M} = \deg(p)$. For each of $2^{n-\deg(p)}$ partial assignments to variables outside $M$, the restricted polynomial has degree $\deg(p)>0$. Therefore, by uniqueness of multilinear representation, this produces an $x\in \{0,1\}^n$ with $p(x) \neq 0$ that is consistent with each partial assignment. Therefore, $\Pr[p(x)\neq 0]\geq 2^{-n}\cdot 2^{n-\deg(p)}=2^{-\deg(p)}$.
\end{proof}

\section{Rational degree lower bound}
\label{sec:poly_lower_bound}

Key to our proof is the use of \emph{minimum} block sensitivity over elements of $f^{-1}(b)$ as an intermediary complexity measure. As far as we are aware, this measure has not been previously leveraged in the literature.\footnote{The closest measure that has appeared in the literature may be the minimum certificate complexity, see, e.g., \cite{composition_tal_2013,kill_odonnell_2014,fourier_arunachalam_2021}. We thank Ronald de Wolf for bringing this measure to our attention. This measure will serve to motivate the approach in \cref{sec:improved_lower_bound}.} The next two lemmas show how the measure can be used both as a lower bound and as (part of) an upper bound.

\begin{lemma}\label{lem:sdeg_lower}
    For every nonconstant $f \colon \{0,1\}^n \to \{0,1\}$,
    \begin{equation}\label{eq:sdeg_lower}
        \min_{x\in \{0,1\}^n}\bs_x(f)\leq 2\sdeg(f)^2.
    \end{equation}
    In particular, either
    \begin{equation}
        \min_{x\in f^{-1}(0)}\bs_x(f)\leq 2\sdeg(f)^2 \quad \text{or} \quad \min_{x\in f^{-1}(1)}\bs_x(f)\leq 2\sdeg(f)^2.
    \end{equation}
\end{lemma}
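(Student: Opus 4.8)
The plan is to use \cref{cor:approx} as a black box. Fix a multilinear sign representation $p$ of $f$ with $\deg(p)=\sdeg(f)$, let $z\in\{0,1\}^n$ be a point maximizing $\abs{p}$ over the cube, and set $h\coloneqq\abs{p(z)}>0$. The claim follows once we show $\bs_z(f)\le 2\sdeg(f)^2$, because then $\min_{x\in\{0,1\}^n}\bs_x(f)\le\bs_z(f)$. The ``in particular'' statement is then immediate from $\{0,1\}^n=f^{-1}(0)\sqcup f^{-1}(1)$: whichever part contains $z$ witnesses the corresponding minimum being at most $2\sdeg(f)^2$.

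To bound $\bs_z(f)$, let $B_1,\dots,B_k\subseteq[n]$ be pairwise disjoint sensitive blocks of $f$ at $z$ with $k=\bs_z(f)$ (if $k=0$ there is nothing to prove). Restrict $p$ to the subcube spanned by flipping whole blocks: define $g\colon\{0,1\}^k\to\mathbb{R}$ by $g(y)\coloneqq p(x)$, where $x$ agrees with $z$ on coordinates outside $B_1\cup\dots\cup B_k$ and, for $j\in B_i$, satisfies $x_j=z_j\oplus y_i$. Concretely this amounts to the substitution $X_j\mapsto z_j\oplus Y_i$ for $j\in B_i$, which is affine in the single variable $Y_i$; since the $B_i$ are disjoint these substitutions involve distinct new variables, so the total degree cannot increase and $\deg(g)\le\deg(p)=\sdeg(f)$.

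Next I would check the three hypotheses of \cref{cor:approx} for the polynomial $g$ in $k$ variables, with distinguished point $0^k$ and the same $h$: (i) $\abs{g(y)}\le h$ for every $y\in\{0,1\}^k$, since $g(y)$ is a value of $p$ on the cube and $z$ maximizes $\abs{p}$; (ii) $\abs{g(0^k)}=\abs{p(z)}=h$; and (iii) for $\abs{y}=1$, writing $y=e_i$, we have $g(e_i)=p(z^{B_i})$, and since $B_i$ is sensitive, $f(z^{B_i})\neq f(z)$, so by the sign-representation property $p(z^{B_i})$ and $p(z)=g(0^k)$ have opposite signs, giving $g(e_i)\cdot g(0^k)<0$. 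Thus \cref{cor:approx} applies and yields $\sqrt{k/2}\le\deg(g)\le\sdeg(f)$, i.e.\ $\bs_z(f)=k\le 2\sdeg(f)^2$, as needed.

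I expect the only delicate point to be the degree bound $\deg(g)\le\deg(p)$: one must argue carefully that collapsing each block $B_i$ to a single fresh variable via an affine substitution, performed simultaneously over the disjoint blocks, does not raise the total degree. The sign bookkeeping in step (iii) — that a sensitive block flips the value of $f$ and hence the sign of the sign representation $p$ — also needs to be tracked, but is routine. Beyond these, the argument is simply an instantiation of \cref{cor:approx} at the point where $\abs{p}$ is largest.
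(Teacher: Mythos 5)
Your proposal is correct and follows essentially the same path as the paper's proof: fix a minimal-degree sign representation $p$, evaluate at a point $z$ where $\abs{p}$ is maximized, restrict to a subcube indexed by disjoint sensitive blocks at $z$ (your $X_j\mapsto z_j\oplus Y_i$ is exactly the paper's case split), and invoke \cref{cor:approx}. The degree bookkeeping and the sign argument in your step (iii) match the paper's, with the only cosmetic difference being that you note the strict inequality $g(e_i)\cdot g(0^k)<0$ where the paper records $\le 0$, which is harmless since the corollary only needs the weak inequality.
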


\begin{proof}
    Let $p$ be a sign representation of $f$ of degree $\sdeg(f)$. Let $h$ be the maximum value of $\abs{p(x)}$ over $x\in \{0,1\}^n$, which is strictly positive. Let $z\in \{0,1\}^n$ be such that $\abs{p(z)}=h$. Let $b \coloneqq \bs_z(f)$, and $B_1,\dots,B_b$ be disjoint sensitive blocks of $f$ at $z$. 
    
    Now consider the function $\tilde{r}\colon \{0,1\}^b \to \mathbb{R}$ defined by $\tilde{r}(t_1,\dots,t_b) = p(x_1,\dots,x_n)$, where
    \begin{align}
        x_j &= \begin{cases}
            z_j &\text{if } j\notin \bigsqcup_i B_i, \\ 
            t_i &\text{if } j\in B_i \text{ and } z_j=0,\\ 
            1 - t_i &\text{if } j\in B_i \text{ and } z_j = 1. 
        \end{cases}
    \end{align}
    By construction, $\tilde{r}$ can be represented on $\{0,1\}^b$ by a multilinear $r \in \mathbb{R}[X_1,\dots,X_b]$ of degree at most $\deg(p)$. Moreover, $\abs{r(u)} \leq h$ for all $u\in \{0,1\}^b$, $\abs{r(0^b)}=\abs{p(z)} = h$, and $r(u) \cdot r(0^b) \leq 0$ for all $u \in \{0,1\}^b$ with $\abs{u} = 1$. Therefore, \cref{cor:approx} gives
    \begin{equation}
        \sqrt{b/2}\leq \deg(r).
    \end{equation}
    Therefore,
    \begin{equation}
        \min_{x\in \{0,1\}^n} \bs_x(f)\leq b\leq 2\deg(r)^2\leq 2\deg(p)^2=2\sdeg(f)^2,
    \end{equation}
    as required.
\end{proof}

\begin{remark}\label{rem:sdeg_lower}
\cref{lem:sdeg_lower} is false if the $\min$ in \cref{eq:sdeg_lower} is replaced by $\max$ as witnessed by the majority function. Moreover, \cref{lem:sdeg_lower} is optimal up to a (square root) log factor as witnessed by the function $\OR_n \circ \AND_n$.
First, $\sdeg(\OR_n \circ \AND_n) = O(\sqrt{n \log n})$. We know a polynomial of degree $O(\sqrt{n \log(1/\epsilon)})$ for $\AND_n$ that evaluates to $1$ on $1$-inputs and evaluates within $[0,\epsilon]$ for $0$-inputs~\cite{inclusion_khan_1996,bounds_buhrman_1999}. Summing this polynomial for each of the $n$ $\AND_n$ gates that feed into the top $\OR_n$ gate with $\epsilon=1/(2n)$, we get a polynomial that is $\geq 1$ on $1$-inputs and $\leq 1/2$ on $0$-inputs, which can be made into a sign-representing polynomial by subtracting $3/4$ and then multiplying by $(-1)$. On the other hand, $\bs_x(\OR_n \circ \AND_n)=n$ for every input $x$, and hence $\min_x \bs_x(\OR_n \circ \AND_n)=n$.
\end{remark}

The next lemma is a tightened version of \cite[Lemma 6]{survey_buhrman_2002}, attributed to Nisan and Smolensky.

\begin{lemma}\label{lem:hitset_upper}
    Let $f$ be a nonconstant Boolean function. Let $p,q$ be nondeterministic representations of $f$ and $\neg f$, respectively. Then, both
    \begin{itemize}
        \item $p$ has a hitting set $H$ of size at most $ \abs{H} \leq \deg(p)\cdot \min_{x\in f^{-1}(0)}\bs_x(f)$, and
        \item $q$ has a hitting set $K$ of size at most $ \abs{K} \leq \deg(q)\cdot \min_{x\in f^{-1}(1)}\bs_x(f)$.
    \end{itemize}
\end{lemma}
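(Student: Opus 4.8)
The plan is to build the hitting set $H$ for $p$ greedily, one sensitive block at a time, exploiting the restriction trick already used in the proof of \cref{thm:ns94}: if one fixes every variable \emph{outside} a maxonomial of a multilinear polynomial to the values of some reference point, the resulting polynomial still has the full degree, hence is nonconstant and in particular nonzero somewhere on the cube.

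First I would fix $z\in f^{-1}(0)$ attaining $b_0 \coloneqq \min_{x\in f^{-1}(0)}\bs_x(f)$. Since $p$ is a nondeterministic representation of $f$ we have $p(z)=0$, and since $f$ is nonconstant, $d\coloneqq\deg(p)\geq 1$ (a constant $p$ would force $f$ constant). Now iterate the following, maintaining $H\subseteq[n]$ initialized to $\emptyset$. If $H$ already meets every $M\in\calM(p)$, stop. Otherwise pick some $M\in\calM(p)$ with $M\cap H=\emptyset$, fix every variable outside $M$ to its value in $z$, and let $p_M\in\mathbb{R}[X_i : i\in M]$ be the resulting multilinear polynomial. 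The coefficient of $\prod_{i\in M}X_i$ in $p_M$ is exactly $c_M\neq 0$, since no other monomial of $p$ (all of degree $\leq d=\abs{M}$) can contribute to it after restriction; hence $\deg(p_M)=d\geq 1$, so by uniqueness of the multilinear representation $p_M$ is nonzero at some $y\in\{0,1\}^M$. Letting $x\in\{0,1\}^n$ agree with $y$ on $M$ and with $z$ elsewhere, we get $p(x)=p_M(y)\neq 0$, so $f(x)=1\neq f(z)$, and therefore $B_M\coloneqq\{i\in M : y_i\neq z_i\}$ is a nonempty sensitive block of $f$ at $z$ with $B_M\subseteq M$. Add $B_M$ to $H$ and repeat.

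The blocks produced across iterations are pairwise disjoint: at the $k$-th step $M_k$ is chosen disjoint from the current $H=B_{M_1}\cup\cdots\cup B_{M_{k-1}}$, and $B_{M_k}\subseteq M_k$. So after $t$ steps we have $t$ pairwise disjoint sensitive blocks of $f$ at $z$, forcing $t\leq\bs_z(f)=b_0$; the process halts after at most $b_0$ steps, each adding at most $\abs{M}=d$ elements, giving $\abs{H}\leq d\cdot b_0=\deg(p)\cdot\min_{x\in f^{-1}(0)}\bs_x(f)$, and at termination $H$ is a hitting set of $p$ by construction. The claim for $q$ is symmetric: $q$ is a nondeterministic representation of $\neg f$, a sensitive block of $\neg f$ at a point is precisely a sensitive block of $f$ there, so $\min_{x\in(\neg f)^{-1}(1)}\bs_x(\neg f)=\min_{x\in f^{-1}(1)}\bs_x(f)$, and running the same argument with $q$, $\neg f$, and a minimizer $w\in f^{-1}(1)$ yields the hitting set $K$.

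I expect the only real subtlety — the main obstacle — to be the termination bookkeeping: one must insist, at every step, on selecting a maxonomial disjoint from the current $H$ (possible exactly until $H$ becomes a hitting set) and on extracting the sensitive block from \emph{within} that maxonomial, so that the collected blocks stay pairwise disjoint and $\bs_z(f)$ caps the iteration count. The restriction step itself is a routine degree computation, resting on the observation that fixing variables outside $M$ leaves the top coefficient $c_M$ untouched.
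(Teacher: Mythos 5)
Your proof is correct and follows essentially the same approach as the paper's: both extract a sensitive block from within each chosen maxonomial by restricting $p$ outside it (observing the top coefficient survives), and both bound the number of maxonomials needed by the block sensitivity at the reference point via disjointness. The only cosmetic difference is that you build $H$ greedily out of the sensitive blocks themselves rather than taking the union of a maximal disjoint family of maxonomials, but the termination argument and the resulting bound $|H| \leq \deg(p)\cdot \bs_z(f)$ are the same.
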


\begin{proof}
    We will prove the lemma for $p$ as the proof is analogous for $q$. 
    
    Let $x\in f^{-1}(0)$. Let $\{M_1,\dots,M_b\}$ be a maximal set of disjoint maximal monomials of $p$. Then $p$ must have a hitting set of size $\deg(p) \cdot b$. This is because $H \coloneqq \bigsqcup_{i=1}^b M_i$ is a hitting set. Otherwise there is another maximal monomial that we can add to $\{M_1,\dots,M_b\}$. But $b\leq \bs_x(f)$ because each $M_i$ contains a sensitive block of $f$ at $x$ by \cref{lem:nisan-smolensky}. Therefore, we deduce 
    \begin{equation}
        \abs{H} \leq \deg(p) \cdot \bs_{x}(f).
    \end{equation}
    Since $x$ is an arbitrary element from $f^{-1}(0)$, the lemma for $p$ follows.
\end{proof}

Combining \cref{lem:sdeg_lower,lem:hitset_upper}, we obtain
\begin{corollary}\label{cor:hitset_upper_combined}
    Let $f$ be a nonconstant Boolean function. Let $p, q$ be nondeterministic representations of $f$ and $\neg f$, respectively. Then, either
    \begin{itemize}
        \item $p$ has a hitting set of size at most $2\deg(p)\sdeg(f)^2$, or
        \item $q$ has a hitting set of size at most $2\deg(q)\sdeg(f)^2$.
    \end{itemize}
\end{corollary}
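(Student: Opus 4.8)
The plan is to simply combine \cref{lem:sdeg_lower} with \cref{lem:hitset_upper}. By \cref{lem:sdeg_lower} applied to the nonconstant function $f$, at least one of the two quantities $\min_{x\in f^{-1}(0)}\bs_x(f)$ and $\min_{x\in f^{-1}(1)}\bs_x(f)$ is at most $2\sdeg(f)^2$. We split into these two cases.

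Suppose first that $\min_{x\in f^{-1}(0)}\bs_x(f)\leq 2\sdeg(f)^2$. Then the first bullet of \cref{lem:hitset_upper} produces a hitting set $H$ of $p$ with
\begin{equation*}
    \abs{H}\leq \deg(p)\cdot \min_{x\in f^{-1}(0)}\bs_x(f)\leq 2\deg(p)\sdeg(f)^2,
\end{equation*}
which gives the first alternative. In the remaining case $\min_{x\in f^{-1}(1)}\bs_x(f)\leq 2\sdeg(f)^2$, the second bullet of \cref{lem:hitset_upper} yields a hitting set $K$ of $q$ with $\abs{K}\leq \deg(q)\cdot \min_{x\in f^{-1}(1)}\bs_x(f)\leq 2\deg(q)\sdeg(f)^2$, which is the second alternative. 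This exhausts all cases.

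There is no real obstacle here: the corollary is a direct consequence of the two lemmas, whose proofs contain all the substance --- \cref{lem:sdeg_lower} via Bernoulli symmetrization and Markov's inequality (through \cref{cor:approx}), and \cref{lem:hitset_upper} via the restriction plus uniqueness-of-multilinear-representation argument à la Nisan--Smolensky. The only point requiring care is to pair the two sides of the dichotomy in \cref{lem:sdeg_lower} with the correct bullet of \cref{lem:hitset_upper}: the $f^{-1}(0)$ branch controls $p$ (the nondeterministic representation of $f$), while the $f^{-1}(1)$ branch controls $q$ (the nondeterministic representation of $\neg f$).
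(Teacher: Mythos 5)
Your proof is correct and matches the paper exactly: the paper itself presents \cref{cor:hitset_upper_combined} as an immediate combination of \cref{lem:sdeg_lower} and \cref{lem:hitset_upper}, and your case split on which branch of the \cref{lem:sdeg_lower} dichotomy holds, paired with the corresponding bullet of \cref{lem:hitset_upper}, is precisely that combination.
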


From \cref{cor:hitset_upper_combined}, we obtain our main theorem by explicitly constructing a decision tree for $f$ using nondeterministic representations of $f$ and $\neg f$.

\begin{theorem}\label{thm:D_upper}
    For every Boolean function $f$,
    \begin{equation}
        \D(f) \leq 4 \sdeg(f)^2 \, \rdeg(f)^2 \leq 16 \rdeg(f)^4.
    \end{equation}
\end{theorem}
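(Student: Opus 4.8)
The plan is to build a decision tree for $f$ recursively, querying one small hitting set at each stage, and to drive the recursion down using the potential $\Phi(f) \coloneqq \ndeg(f) + \ndeg(\neg f)$. By \cref{fact:rdeg_ndeg}, fix nondeterministic representations $p$ of $f$ and $q$ of $\neg f$ with $\deg(p) = \ndeg(f) \le \rdeg(f)$ and $\deg(q) = \ndeg(\neg f) \le \rdeg(f)$. If $f$ is constant, then $D(f) = 0$ and the bound holds trivially, so assume $f$ is nonconstant, which forces $\deg(p),\deg(q) \ge 1$. By \cref{cor:hitset_upper_combined}, one of $p,q$ has a hitting set of size at most $2\rdeg(f)\sdeg(f)^2$; replacing $f$ by $\neg f$ if necessary (which leaves $D(f)$, $\rdeg(f)$, $\sdeg(f)$ unchanged and merely swaps $\ndeg(f)$ with $\ndeg(\neg f)$), we may assume it is $p$, with hitting set $H$ of size $\abs{H} \le 2\deg(p)\sdeg(f)^2 \le 2\rdeg(f)\sdeg(f)^2$.

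The decision tree first queries every variable in $H$, at cost $\abs{H}$, and then, in the branch corresponding to an assignment $\rho\colon H\to\{0,1\}$, runs an optimal decision tree for the restricted function $f_\rho$; thus $D(f) \le \abs{H} + \max_\rho D(f_\rho)$. The key claim is that $\Phi(f_\rho) \le \Phi(f) - 1$ for every $\rho$, while $\rdeg(f_\rho)\le\rdeg(f)$ and $\sdeg(f_\rho)\le\sdeg(f)$. The latter two hold because restricting a sign (resp.\ nondeterministic) representation of $f$ produces one of $f_\rho$ of no larger degree --- the remaining domain is a subset of $\{0,1\}^{n-\abs{H}}$, so the sign/nonvanishing conditions are inherited --- and then $\rdeg(f_\rho) = \max(\ndeg(f_\rho),\ndeg(\neg f_\rho)) \le \max(\ndeg(f),\ndeg(\neg f)) = \rdeg(f)$. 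For the potential drop, restricting $q$ gives $\ndeg(\neg f_\rho) \le \deg(q) = \ndeg(\neg f)$, and the crucial mechanical point is that restricting $p$ gives $\ndeg(f_\rho) \le \deg(p) - 1 = \ndeg(f) - 1$: since $H$ is a hitting set, every maxonomial of $p$ contains a variable of $H$, so after substituting $\rho$ each such monomial either vanishes (some $H$-variable set to $0$) or drops at least one factor (an $H$-variable set to $1$), hence no monomial of degree $\deg(p)$ survives in $p_\rho$. I expect this degree-drop, together with the realization that one must recurse on $\ndeg$ (alternating between $f$ and $\neg f$ as dictated by \cref{cor:hitset_upper_combined}) rather than directly on $\rdeg$, to be the conceptual heart of the argument; the rest is accounting.

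To finish, I would argue by induction on $\Phi(f)$, with the constant-function case (where $D(f)=0$) as the base case, to prove $D(f) \le \Phi(f)\cdot 2\rdeg(f)\sdeg(f)^2$. Indeed, applying the inductive hypothesis to each $f_\rho$ and using $\abs{H}\le 2\deg(p)\sdeg(f)^2 \le 2\rdeg(f)\sdeg(f)^2$,
\[
D(f) \;\le\; \abs{H} + \max_\rho D(f_\rho) \;\le\; 2\rdeg(f)\sdeg(f)^2 + (\Phi(f)-1)\cdot 2\rdeg(f)\sdeg(f)^2 \;=\; \Phi(f)\cdot 2\rdeg(f)\sdeg(f)^2 .
\]
Since $\Phi(f) = \ndeg(f) + \ndeg(\neg f) \le 2\max(\ndeg(f),\ndeg(\neg f)) = 2\rdeg(f)$, this yields $D(f) \le 4\,\sdeg(f)^2\rdeg(f)^2$, as required. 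The accounting works out cleanly only because restriction never increases $\rdeg$, $\sdeg$, or $\Phi$, so the per-level query bound $2\rdeg(f)\sdeg(f)^2$ and the depth bound $\Phi(f)\le 2\rdeg(f)$ computed at the root remain valid throughout the tree.
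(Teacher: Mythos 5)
Your proposal is correct and is essentially the paper's proof, recast as a recursion with potential function $\Phi(f)=\ndeg(f)+\ndeg(\neg f)$ instead of the paper's explicit iterative decision-tree construction. The engine is identical: \cref{cor:hitset_upper_combined} supplies the small hitting set, querying it strictly reduces $\deg(p)$ or $\deg(q)$ (your ``$\Phi$ drops by at least $1$'' is the paper's ``at most $\deg(p)+\deg(q)\le 2\rdeg(f)$ iterations''), and restriction-monotonicity of $\rdeg$ and $\sdeg$ keeps the per-round cost at $2\rdeg(f)\sdeg(f)^2$.
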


\begin{proof}
    The second inequality follows from \cref{fact:sdeg_rdeg} so it suffices to prove the first. Let $p$ and $q$ be nondeterministic representations of $f$ and $\neg f$, respectively, such that $\deg(p)=\ndeg(f)$ and $\deg(q)=\ndeg(\neg f)$.
    
    We claim that the following \cref{alg:determining-f-one} gives a deterministic query algorithm that computes $f$ and uses at most $4 \sdeg(f)^2 \, \rdeg(f)^2$ queries.

    \begin{algorithm}[H]
    \caption{Deterministic query algorithm for computing $f$}\label{alg:determining-f-one}
    \begin{algorithmic}[1]
    \State $i\leftarrow 1$; $(f^1, p^1, q^1) \leftarrow (f, p, q)$
    \While{$f^i$ is not constant} \label{ln:alg-1-while}
        \State Query a hitting set of $p^i$ of size at most $2 \deg(p^i) \sdeg(f^i)^2$ if it exists, else query a hitting set of $q^i$ of size at most $2 \deg(q^i) \sdeg(f^i)^2$. \label{ln:alg-1-hit}
        \State Set $f^{i+1}$, $p^{i+1}$, $q^{i+1}$ to be $f^i$, $p^i$, $q^i$, respectively, but with the variables just queried fixed to their queried values.
        \State $i\leftarrow i+1$
    \EndWhile
    \State \Return constant value of $f^i$\label{ln:alg-1-return}
    \end{algorithmic}
    \end{algorithm}

    Since restriction preserves nondeterministic representations, at every iteration of the while loop, $p^i$ is a nondeterministic representation of $f^i$ and  $q^i$ is a nondeterministic representation of $\neg f^i$. Therefore, \cref{ln:alg-1-hit} is possible by \cref{cor:hitset_upper_combined}.

    First note that the value the algorithm returns on \cref{ln:alg-1-return} must be the value of $f$ on the input. Indeed, $f^i$ is a restriction of $f$ to values queried from the input, and the condition of the while loop ensures that $f^i$ is a constant upon termination.

    Next we bound the number of queries the algorithm uses. By the definition of a hitting set, if a hitting set of $p^i$ is queried, then the degree of $p^{i+1}$ must be strictly less than that of $p^i$. On the other hand, if a hitting set of $q^i$ is queried, then the degree of $q^{i+1}$ must be strictly less than that of $q^i$. If either $p^i$ or $q^i$ is constant, $f^i$ is constant since $p^i,q^i$ are nondeterministic representations of $f^i$ and $\neg f^i$, respectively. Therefore, the number of iterations that make queries is at most $\deg(p)+\deg(q)$, which is at most $2\rdeg(f)$.
    
    Moreover, the number of queries at the $i$th iteration is at most 
    \begin{align*}
        &~2 \max(\deg(p^i),\deg(q^i)) \cdot \sdeg(f^i)^2
        \\
        ~\leq&~2\max(\deg(p),\deg(q)) \cdot \sdeg(f^i)^2
        &&\text{($p^i,q^i$ are restrictions of $p,q$)}
        \\
        ~=&~2\rdeg(f) \sdeg(f^i)^2
        &&\text{(\cref{fact:rdeg_ndeg})}
        \\
        ~\leq&~2 \rdeg(f) \sdeg(f)^2
        &&\text{($f^i$ is a restriction of $f$)}.
    \end{align*}

    Therefore, the total number of queries is at most $4\sdeg(f)^2 \rdeg(f)^2$, as required.
\end{proof}

\section{Improved lower bound}
\label{sec:improved_lower_bound}

To understand how the previous proof can be improved, let us recap it in a more modular way. For this we need the notion of certificate complexity. Given $f\colon \{0,1\}^n \to \{0,1\}$ and $x\in \{0,1\}^n$, the certificate complexity of $f$ at $x$, denoted $\C_x(f)$, is the minimum number of bits of $x$ such that fixing them determines $f(x)$.

\paragraph{Step 1.} We start with the argument that upper bounds $\D(f)$ in the proof of \cref{thm:D_upper}. The idea there is that if $p$ is a nondeterministic polynomial for $f$ and we query a hitting set for $p$, then restricting $p$ to the queried variables produces a polynomial that nondeterministically represents the restricted function, and its degree has dropped by 1 (or more). One simple hitting set for any nondeterministic polynomial for $f$ is any $0$-certificate for $f$. To see this, note that a $0$-certificate must intersect with every maximal monomial of a nondeterministic polynomial $p$. If it did not, then by restricting to those variables we get a function that cannot be identically zero (since it has at least one monomial), so there is some setting of those bits that makes it evaluate to $1$. (This is the argument proving \cref{lem:nisan-smolensky}.) But then this $1$-input is consistent with the $0$-certificate, which is a contradiction. This argument shows that querying any $0$-certificate reduces the degree of our nondeterministic polynomial by $1$. More generally, if we pick the input $x$ with minimum certificate complexity, $\Cmin(f)=\min_x \C_x(f)$, this reduces the degree of either the nondeterministic polynomial for $f$ or $\neg f$ by $1$, as in the current argument. 

It is tempting to now conclude (incorrectly) that $\D(f) \leq (\ndeg(f)+\ndeg(\neg f)) \Cmin(f)$ because each query of size $\Cmin(f)$ drops the degree of one of these polynomials by at least $1$, and hence after $\ndeg(f)+\ndeg(\neg f)$ rounds, we arrive at a trivial polynomial. The flaw in this argument is that after the first step, we have a restriction $g$ of $f$, and it is not always true that $\Cmin(f) \geq \Cmin(g)$. This generally happens with all ``best-case'' complexity measures (i.e., ones where we minimize over inputs instead of maximizing over inputs as done in traditional worst-case measures), because a restriction might kill the best input making the complexity measure larger. The standard solution for such best-case measures is to explicitly define a ``downward-closed'' version of the measure that maximizes over all restrictions. More formally, we define $\Cmindown(f)=\max_g \Cmin(g)$, where $g$ ranges over all restrictions of $f$. We can now (correctly) conclude that
\begin{equation}\label{eq:DCmindown}
    \D(f) \leq \bigl(\ndeg(f)+\ndeg(\neg f)\bigr) \Cmindown(f) \leq 2\rdeg(f) \Cmindown(f).
\end{equation}

\paragraph{Step 2.} We now turn to the argument in \cref{lem:hitset_upper} that gives us an upper bound on $\Cmindown(f)$. The standard argument (due to Nisan~\cite{crew_nisan_1991}) that relates $\C(f)$ and $\bs(f)$ says that for every input $x$, if we take a set of maximal disjoint sensitive blocks, their union is a certificate for $x$. We know that every input $x$ can have at most $\bs_x(f)$ disjoint sensitive blocks, and each of them can be taken to be a minimal sensitive block, so all we need to do is upper bound the size of the largest minimal sensitive block. A minimal sensitive block of size $k$ on a $0$-input is just a copy of the $\OR$ function of size $k$ hiding inside our function up to negating the input variables. For example, on the $0^n$ input, a minimal sensitive block of size $k$ has the property that it evaluates to $0$ when at most $k-1$ of its bits are set to $1$ and evaluates to $1$ when all $k$ bits are set to $1$, which is exactly the $\AND_k$ function. Similarly for a $1$-input we get the $\neg \AND_k$ function. In Nisan's standard argument, we upper bound the size of the largest minimal block by $\mathrm{s}(f)$, the sensitivity of $f$, since the $\AND$ and $\neg\AND$ functions of size $k$ have sensitivity $k$. But we can also upper bound it by $\rdeg(f)$, since the $\AND$ and $\neg\AND$ functions of size $k$ have rational degree $k$~\cite{rdeg_iyer_2025}. This gives us $\C_x(f) \leq \bs_x(f) \rdeg(f)$. By minimizing over all inputs, we get
\begin{equation}
    \Cmin(f) \leq \bsmin(f) \rdeg(f).
\end{equation}

\paragraph{Step 3.} Finally, \cref{lem:sdeg_lower} already gives us the last part of the argument:
\begin{equation}
    \bsmin(f) \leq 2 \sdeg(f)^2.
\end{equation}

Combining steps 2 and 3 gives us $\Cmin(f) \leq 2 \sdeg(f)^2 \rdeg(f)$. To use Step 1, we need to upper bound $\Cmindown(f)$, not $\Cmin(f)$. But degree measures do not increase under downward closure, so we immediately get 
\begin{equation}
    \Cmindown(f) \leq 2 \sdeg(f)^2 \rdeg(f).
\end{equation}
Combining this with \cref{eq:DCmindown} gives us $\D(f) \leq 4 \sdeg(f)^2 \rdeg(f)^2$.

\paragraph{Slack in the proof.} So how can this proof be tightened? Step 1 upper bounds $\D(f)$ using minimum certificate complexity and Step 3 lower bounds sign degree (squared) using minimum block sensitivity, and we know that $\Cmin(f)$ and $\bsmin(f)$ can genuinely be different, just like $\C(f)$ and $\bs(f)$ can be different. Can we find a complexity measure that is in between certificate complexity and block sensitivity as a compromise?

In previous work~\cite{composition_tal_2013,fbs_gilmer_2016}, it was observed that for every input $x$, $\C_x(f)$ and $\bs_x(f)$ admit integer-program formulations whose linear-programming (LP) relaxations are dual to each other. Consequently, the optimum values of these relaxations are equal by strong duality. The LP relaxation of $\bs_x(f)$ is called fractional block sensitivity and is denoted $\fbs_x(f)$. The LP relaxation of $\C_x(f)$ is called fractional certificate complexity, but we will instead use an algorithmic interpretation of it called randomized certificate complexity, denoted $\RC_x(f)$, which was first defined in \cite{qcertificate_aaronson_2008}.

A clear strategy emerges that replaces the previous three-step strategy:
\begin{enumerate}
    \item[Step 1.] Upgrade $\D(f) \leq 2 \rdeg(f) \Cmindown(f)$ to have $\RCmindown(f)$ instead of $\Cmindown(f)$.
    \item[Step 2.] Use linear programming duality to conclude that $\RCmindown(f) = O(\fbsmindown(f))$.
    \item[Step 3.] Upgrade $\bsmin(f) \leq 2 \sdeg(f)^2$ to have $\fbsmin(f)$ instead of $\bsmin(f)$.
\end{enumerate}

This strategy conceptually mirrors the previous proof. In Step 1, we upper bound $\D(f)$ using $\RC$, which is morally quite similar to $\C$. Informally, both are algorithmic measures, in the sense that it is easy to upper bound them by exhibiting a (deterministic or randomized) certificate. In Step 3, we replace $\bs$ with $\fbs$, which are also morally similar, since they are both lower bound measures: it is easy to lower bound them by exhibiting an (integer or fractional) set of sensitive blocks. The old Step 2 used Nisan's argument to relate an algorithmic measure ($\C$) to a lower bound measure ($\bs$). In the new Step 2, we use linear programming duality to relate the algorithmic measure ($\RC$) to the lower bound measure ($\fbs$).

We execute this strategy for the remainder of this section. Our upper bound on $\D(f)$ contains an extra $\log n$ factor, which is why the overall result we obtain is $\D(f) \leq O(\rdeg(f)^3 \log n)$.

\subsection{Randomized certificate complexity and fractional block sensitivity}

The definition of randomized certificate complexity was first introduced in \cite{qcertificate_aaronson_2008}. The definition uses the notion of randomized decision tree, which is a probability distribution over decision trees.
\begin{definition}[Randomized certificate complexity]
Let $f\colon \{0,1\}^n \to \{0,1\}$. Let $x\in \{0,1\}^n$. We say that a randomized decision tree $\tau$ is an $\rc$-verifier for $x$ (with respect to $f$) if (i) given input $x$, $\tau$ accepts with probability $1$; and (ii) given input $y\in \{0,1\}^n$ such that $f(y) \neq f(x)$, $\tau$ rejects with probability $\geq 1/2$. The randomized certificate complexity of $f$ at $x$, denoted $\rc_x(f)$, is defined to be the minimum expected number of queries made by an $\rc$-verifier for $x$.
\end{definition}

We observe that there always exists an $\rc$-verifier for $x$ that works by nonadaptively querying at most $O(\rc_x(f))$ bits of the input. This will be important for bounding $\D(f)$.
\begin{lemma}\label{lem:nonadaptive_rc}
    Let $f\colon \{0,1\}^n \to \{0,1\}$. Let $x\in \{0,1\}^n$. There exists an $\rc$-verifier $\tau$ for $x$ that works by sampling a random subset $Q\subseteq [n]$ of size $O(\rc_x(f))$, querying the input on $Q$, and accepting if and only if the returned values are consistent with $x$.
\end{lemma}

\begin{proof}
Let $\sigma$ be an $\rc$-verifier for $x$ whose expected number of queries is equal to $\rc_x(f)$. By definition, $\sigma$ is a randomized decision tree, meaning it is a probability distribution $(p_T)_T$ over decision trees $T$. For each decision tree $T$ with $p_T>0$, let $Q_T$ denote the set of indices that $T$ queries given input $x$. Since $\sigma$ accepts $x$ with probability $1$, $T$ must accept $x$.

Consider the following randomized decision tree $\tau''$: sample $Q_T$ with probability $p_T$, query the input on $Q_T$, and accept if and only if the returned values are consistent with $x$. Given input $x$, it is clear that $\tau''$ always accepts. Given input $y\in \{0,1\}^n$ such that $f(y) \neq f(x)$,
\begin{equation}
     \Pr[\text{$\tau''$ accepts $y$}] = \Pr_T[\text{$y$ agrees with $x$ on $Q_T$}] \leq \Pr_T[\text{$T$ accepts $y$}] < 1/2.
\end{equation}

Since $\expect_T[\abs{Q_T}] \leq \rc_x(f)$, Markov's inequality gives $\Pr_T[\abs{Q_T} \geq 10 \rc_x(f)] \leq 1/10$. Let $\tau'$ be the variant of $\tau''$ such that if $Q_T$ with $\abs{Q_T} \geq 10 \rc_x(f)$ is sampled, then $\tau'$ immediately accepts without making any queries. Given input $x$, it is clear that $\tau'$ always accepts. Given input $y\in \{0,1\}^n$ such that $f(y) \neq f(x)$,
\begin{equation}
     \Pr[\text{$\tau'$ accepts $y$}] < 1/2 + 1/10.
\end{equation}

Let $\tau$ be defined by repeating $\tau'$ twice and accepting if and only if $\tau'$ accepts both times. Given input $x$, it is clear that $\tau$ always accepts. Given input $y\in \{0,1\}^n$ such that $f(y) \neq f(x)$,
\begin{equation}
     \Pr[\text{$\tau$ accepts $y$}] < (1/2 + 1/10)^2 = 0.36 < 1/2.
\end{equation}

Therefore, $\tau$ is an $\rc$-verifier for $x$ that samples a random subset $Q\subseteq [n]$ with $\abs{Q}<20\rc_x(f)$, queries the input on $Q$, and accepts if and only if the returned values are consistent with $x$.
\end{proof}

The definition of fractional block sensitivity was first introduced in \cite{composition_tal_2013,fbs_gilmer_2016}.
\begin{definition}[Fractional block sensitivity]
    Let $f\colon \{0,1\}^n \to \{0,1\}$. Let $x\in \{0,1\}^n$. Write $\calB$ for the set of sensitive blocks of $f$ at $x$. The fractional block sensitivity of $f$ at $x$, denoted $\fbs_x(f)$, is defined to be the optimal value of the following linear program: 
    \begin{align*}
        \max \quad & \sum_{B \in \calB} w_B
        \\
        \mathrm{s.t.} \quad &\forall B \in \calB \colon\ w_B \geq 0
        \\
        \quad & \forall i \in [n] \colon \sum_{B \in \calB \colon i \in B} w_B \leq 1
    \end{align*}
    (If the $w_B$s are restricted to take values in $\{0,1\}$, then the value of the program equals $\bs_x(f)$.)
\end{definition}

The following tight relation between $\fbs_x(f)$ and $\rc_x(f)$ was first shown in \cite{composition_tal_2013,fbs_gilmer_2016}.

\begin{theorem}\label{thm:duality}
For every $f\colon \{0,1\}^n \to \{0,1\}$ and $x\in \{0,1\}^n$,  $\fbs_x(f) = \Theta(\rc_x(f))$.
\end{theorem}

\begin{proof}
    \cite[Theorem 2]{composition_tal_2013} shows $\fbs_x(f) = \mathrm{FC}_x(f)$ by linear programming duality, where $\mathrm{FC}_x(f)$ is the fractional certificate complexity of $f$ at $x$. But $\mathrm{FC}_x(f) = \Theta(\rc_x(f))$ by \cite[Claim 5.5]{composition_tal_2013}.
\end{proof}

To effectively use $\rc_x(f)$ in our proof, it is crucial to consider its minimum, or ``best-case'' value, over all $x\in \{0,1\}^n$. This is similar to how the use of $\min_{x\in \{0,1\}^n} \bs_x(f)$ is crucial in \cref{sec:poly_lower_bound}. We define this notion generally below.
\begin{definition}[Best-case measures]
    Let $f\colon \{0,1\}^n \to \{0,1\}$ and $x\in \{0,1\}^n$. Let $M_x(f)$ be a local measure of the complexity of $f$ at $x$. Then we define $M_{\min}(f) \coloneqq \min_{x\in \{0,1\}^n} M_x(f)$.
\end{definition}

We will also use downward closure, a notion which has appeared in works such as  \cite{sensitivity_linzhang_2017, diameter_chaubalgal_2021}.

\begin{definition}[Downward closure]
Let $M$ be a complexity measure of Boolean functions. The downward closure of $M$, denoted $M^{\downarrow}$, is the complexity measure of Boolean functions defined by $M^{\downarrow}(f) \coloneqq \max_{\rho} M(f|_{\rho})$, where the maximum is taken over all restrictions $\rho$.
\end{definition}

With the above definitions in place, we can state the following corollary of \cref{thm:duality}.

\begin{corollary}\label{eq:rcmin_fbs}
    For every Boolean function $f$,  $\rc_{\min}^{\downarrow}(f) \leq O(\fbs^{\downarrow}_{\min}(f))$.
\end{corollary}

\begin{proof}
Let $\rho$ be a restriction of $f$ and $x$ be an input of $f|_{\rho}$. \cref{thm:duality} shows $\rc_{x}(f|_\rho) \leq O(\fbs_{x}(f|_\rho))$. Taking the minimum over $x$ gives $\rc_{\min}(f|_\rho) \leq O(\fbs_{\min}(f|_\rho))$. Taking the maximum over $\rho$ gives $\rc_{\min}^{\downarrow}(f) \leq O(\fbs^{\downarrow}_{\min}(f))$, as required.
\end{proof}

\subsection{Randomized certificate complexity upper bounds decision tree complexity}

We establish the following result in this section,
which strengthens \cite[Theorem 10]{qcertificate_aaronson_2008}\footnote{The original theorem shows $\R_0(f) \leq O\bigl(\ndeg(f) \cdot \max_{x\in \{0,1\}^n} \RC_x(f) \cdot \log n\bigr)$ for every $f\colon \{0,1\}^n \to \{0,1\}$, where $\R_0(f)$ is the zero-error randomized query complexity of $f$.} to involve $\rc_{\min}^{\downarrow}$ and $D$. In order to involve $\rc_{\min}^{\downarrow}$, we exploit the freedom to choose which $\rc$-verifier to run in the zero-error randomized algorithm used to prove the original theorem. To involve $\D$, we further derandomize that algorithm. 

\begin{theorem}\label{thm:rcmin}
    For every $f\colon \{0,1\}^n \to \{0,1\}$, 
    \begin{equation}
        \D(f) \leq O\bigl(\rdeg(f) \, \rc_{\mathrm{min}}^{\downarrow}(f) \, \log{n}\bigr).
    \end{equation}
\end{theorem}

We will use the following two lemmas in the proof. For an $\rc$-verifier $\tau$ of the type in \cref{lem:nonadaptive_rc}, we will identify $\tau$ with the distribution from which it samples the subset it queries.

\begin{lemma}\label{lem:rc-verifier-hits-maximal-monomials}
    Let $g$ be a nonconstant Boolean function and $x \in g^{-1}(0)$. Let $p$ be a nondeterministic representation of $g$ and $M$ be a maximal monomial of $p$. Suppose $\tau$ is an $\rc$-verifier for $x$ of the type in \cref{lem:nonadaptive_rc}, then $\Pr\limits_{Q\sim \tau}[Q \cap M \neq \emptyset] \geq 1/2$.
\end{lemma}

\begin{proof}
    By \cref{lem:nisan-smolensky}, there exists $B \subseteq M$ such that $g(x^B) = 1$. By definition, $\tau$ accepts $x$ with probability $1$. Therefore, if $\tau$ rejects $x^B$, then it must query some of the variables in $B$. Therefore,
    \begin{equation}
        \Pr\limits_{Q\sim \tau}[Q\cap M \neq \emptyset] \geq \Pr\limits_{Q\sim \tau}[Q \cap B \neq \emptyset] \geq \Pr[\tau \text{ rejects $x^B$}] \geq \frac{1}{2},
    \end{equation}
    as required.
\end{proof}

For the next lemma, we need the following definition of a potential function, which is inspired by a similar function used to prove \cite[Lemma 9]{qcertificate_aaronson_2008}. This function will serve to track the progress made by the deterministic query algorithm we construct to prove \cref{thm:rcmin}.
For a multilinear polynomial $r$, recall that $\calM(r)$ denotes the set of maximal monomials of $r$.

\begin{definition}[Potential function]
    For multilinear $r\in \mathbb{R}[X_1,\dots,X_m]$, define
    \begin{equation}
        \Phi(r) \coloneqq \sum_{\emptyset \neq M \in \calM(r)} 3^{\lvert M \rvert} \lvert M \rvert!
    \end{equation}
\end{definition}

\begin{lemma}\label{cor:deterministic-rc-round} Let $g\colon \{0,1\}^m \to \{0,1\}$ be nonconstant. Let $r$ be a nondeterministic representation of $g$. Then for every $Q\subseteq[m]$ and every restriction $r'$ of $r$ obtained by fixing the variables in $Q$ to arbitrary values, $\Phi(r')\leq \Phi(r)$. Furthermore, for every $x \in g^{-1}(0)$,
there exists $Q^{*} \subseteq [m]$ with $\abs{Q^*} \leq O(\rc_x(g))$ such that, for every restriction $r'$ of $r$ obtained by fixing the variables in $Q^{*}$ to arbitrary values, $\Phi(r') \leq (3/4) \Phi(r)$.
\end{lemma}

\begin{proof}
    Let $Q\subseteq [m]$ and $r'$ be the restriction of $r$ obtained by fixing the variables in $Q$ to arbitrary values in $\{0,1\}$. It will be convenient to write
    \begin{equation}
        \Phi^{Q}_{\hit}(r) \coloneqq \sum_{\substack{\emptyset \neq M \in \calM(r) \colon \\ M \cap Q \neq \emptyset}} 3^{\lvert M \rvert} \lvert M \rvert! \quad \text{and} \quad \Phi^{Q}_{\unhit}(r) \coloneqq \sum_{\substack{\emptyset \neq M \in \calM(r)\colon\\ M \cap Q = \emptyset}} 3^{\lvert M \rvert} \lvert M \rvert!
    \end{equation}
    These definitions imply $\Phi(r) = \Phi^{Q}_{\hit}(r)+ \Phi^{Q}_{\unhit}(r)$.

    Consider a maximal monomial $M$ of $r$. If $M$ does not intersect $Q$, then $M$ remains a maximal monomial of $r'$. On the other hand, if $M$ does intersect $Q$, then $M$ cannot be present in $r'$ but it is possible for some submonomials of $M$ to become maximal monomials of $r'$.\footnote{For example, consider $r = X_1 + X_1 X_2$ and fix $X_2$ to $0$ to obtain $r' = X_1$.} Writing $k\coloneqq \abs{M}$, the contribution of such submonomials to $\Phi(r')$ is upper bounded by
    \begin{equation}
        \sum_{l = 1}^{k-1} \binom{k}{l} 3^l l! = 3^k k! \sum_{r=1}^{k-1} \frac{1}{3^r r!} \leq 3^k k! \sum_{r = 1}^{\infty} \frac{1}{3^r} = \frac{1}{2} \cdot 3^k k!
    \end{equation}
     Since $3^k k!$ is the contribution of $M$ to $\Phi^{Q}_{\hit}(r)$, we can sum over all maximal monomials $M$ of $r$ that intersect $Q$ to deduce
    \begin{equation}\label{eq:potential_decrease}
        \Phi(r') \leq \Phi^{Q}_{\unhit}(r) + \frac{1}{2} \Phi^{Q}_{\hit}(r) = \Phi(r) - \frac{1}{2} \Phi^{Q}_{\hit}(r).
    \end{equation}
    In particular, $\Phi(r')\leq \Phi(r)$. 

    We now proceed to prove the ``furthermore'' part. By \cref{lem:nonadaptive_rc}, there exists an $\rc$-verifier $\tau$ for $x$ that works by sampling a random subset $Q\subseteq[m]$ of size $O(\rc_x(g))$ and querying the input on $Q$. By \cref{lem:rc-verifier-hits-maximal-monomials}, $\Pr_{Q\sim \tau}[M \cap Q \neq \emptyset] \geq 1/2$ for every maximal monomial $M$ of $r$. Therefore,
    \begin{align}
        \expect_{Q\sim \tau}[\Phi^{Q}_{\hit}(r)] \geq \frac{1}{2} \Phi(r).
    \end{align}
    
    In particular, there exists $Q^{*} \in \supp(\tau)$ such that $\Phi_{\hit}^{Q^{*}}(r) \geq \Phi(r)/2$. As $Q^{*} \in \supp(\tau)$, we have $\abs{Q^*} \leq O(\rc_x(g))$. Moreover, \cref{eq:potential_decrease} shows that for every restriction $r'$ of $r$ obtained by fixing variables in $Q^*$, we have $\Phi(r') \leq  \Phi(r) - \Phi^{Q^{*}}_{\hit}(r)/2 \leq (3/4) \Phi(r)$, as required.
\end{proof}

We are now ready to prove \cref{thm:rcmin}.
\begin{proof}[Proof of \cref{thm:rcmin}.] 

Let $p$ and $q$ be nondeterministic representations of $f$ and $\neg f$, respectively, such that $\deg(p)~=~\ndeg(f)$ and $\deg(q)~=~\ndeg(\neg f)$. For a nonconstant Boolean function $g$, an input $x\in g^{-1}(0)$, and a nondeterministic representation $r$ of $g$, denote by $\textsc{RC-Sets}(g, x, r)$ the sets $Q^*$ satisfying the conditions of \cref{cor:deterministic-rc-round} with respect to $(g,x,r)$.

We claim that the following \cref{alg:determining-f} gives a deterministic query algorithm that computes $f$ and uses at most $O(\rdeg(f) \, \rc_{\mathrm{min}}^{\downarrow}(f) \, \log{n})$ queries.

    \begin{algorithm}[H]
    \caption{Deterministic query algorithm for computing $f$}\label{alg:determining-f}
    \begin{algorithmic}[1]
    \State $\rho \leftarrow \text{$\emptyset$ (empty restriction)}$ \label{ln:let-rho-be-empty}
    \While{$f|_{\rho}$ is not constant} \label{ln:while}
        \State \Let $x \in \argmin_{y} \rc_y(f|_{\rho})$ \label{ln:let-g-let-x}
        \If{$f|_{\rho}(x) = 0$} \label{ln:if-gx-zero}
            \State \Let $Q^{*} \in \textsc{RC-Sets}(f|_{\rho}, x, p|_{\rho})$  \label{ln:pick-rc-set-for-g}
        \Else
            \State \Let $Q^{*} \in \textsc{RC-Sets}(\neg f|_{\rho}, x, q|_{\rho})$ \label{ln:pick-rc-set-for-not-g}
        \EndIf
        \State Query $Q^*$ and extend $\rho$ according to the queried values \label{ln:query-Q-star}
    \EndWhile
    
    \State \Return constant value of $f|_\rho$ \label{ln:alg-2-return}
    
    \end{algorithmic}
    \end{algorithm}

Since restriction preserves nondeterministic representations, at every iteration of the while loop, $p|_{\rho}$ and $q|_{\rho}$ are nondeterministic representations of $f|_{\rho}$ and $\neg f|_{\rho}$, respectively. Therefore, \cref{cor:deterministic-rc-round} guarantees that $\textsc{RC-Sets}(f|_{\rho}, x, p|_{\rho})$ and $\textsc{RC-Sets}(\neg f|_{\rho}, x, q|_{\rho})$ are nonempty.

First note that the value the algorithm returns on \cref{ln:alg-2-return} must be the value of $f$ on the input. Indeed, $f|_\rho$ is a restriction of $f$ to values queried from the input, and the condition of the while loop ensures that $f|_\rho$ is a constant upon termination.

Next we bound the number of queries the algorithm uses. \cref{cor:deterministic-rc-round} shows that each iteration makes at most $O(\rc_{\min}^{\downarrow}(f))$ queries. Therefore, it suffices to bound the number of iterations by $O(\rdeg(f) \, \log n)$. We do so by tracking the following potential measure
\begin{equation}
    \Phi_{\rho} \coloneqq \Phi(p|_{\rho}) \cdot \Phi(q|_{\rho}),
\end{equation}
which satisfies the following three properties:
\begin{enumerate}
    \item Initially, we have $\Phi_{\emptyset} \leq n^{O(\rdeg(f))}$. This is because we can bound the potential of $p$ by the total potential of all possible monomials that could appear in a polynomial of degree $\deg(p) = \ndeg(f)$:
    \begin{equation}
        \Phi(p) \leq \sum_{k=1}^{\ndeg(f)} \binom{n}{k}3^k k! \leq \sum_{k=1}^{\ndeg(f)} (3n)^k \leq n^{O(\ndeg(f))},
    \end{equation}
    and, similarly,
    \begin{equation}
        \Phi(q) \leq n^{O(\ndeg(\neg f))}.
    \end{equation}

    Therefore,
    \begin{equation}
        \Phi_{\emptyset} = \Phi(p) \cdot \Phi(q) \leq n^{O(\ndeg(f) + \ndeg(\neg f))} \leq n^{O(\rdeg(f))}.
    \end{equation}
    
    \item If $\Phi_\rho < 1$, then $f|_\rho$ is constant. This is because $\Phi_\rho < 1$ implies $\Phi(p|_\rho)<1$ or $\Phi(q|_\rho)<1$. If $\Phi(p|_\rho)<1$, then $p|_\rho$ cannot have any nonempty maximal monomials as can be seen from the definition of $\Phi$. Therefore, $p|_\rho$ is constant and so $f|_\rho$ is constant. Similarly, if $\Phi(q|_\rho)<1$, then $f|_\rho$ is constant.
    \item $\Phi_\rho$ decreases by a factor of at least $3/4$ at every iteration. This is because \cref{cor:deterministic-rc-round} shows that one of $\Phi(p|_{\rho})$ or $\Phi(q|_{\rho})$ decreases by a factor of at least $3/4$, and the other cannot increase.
\end{enumerate}

Taken together, these three properties imply that the total number of iterations can be bounded by $\log_{4/3}(n^{O(\rdeg(f))}) \leq O(\rdeg(f) \log n)$, as required.
\end{proof}

\subsection{Fractional block sensitivity lower bounds sign degree}

In this subsection, we upgrade \cref{lem:sdeg_lower} to a lower bound on sign degree by minimum fractional block sensitivity. We implement the upgrade by adapting the proof of \cite[Lemma 28]{fbs_shalev_2021}.

\begin{lemma}[Bounded polynomial for partial-OR]\label{lem:approx_or}
    For every positive integer $k$, there exists a $q\in \mathbb{R}[X_1,\dots,X_k]$ of degree $\lceil (\pi/2) \cdot \sqrt{k}\rceil$ such that $q(0^k)=0$ and $q(e_j)=1$ for all $j\in [k]$ and $q(x)\in [0,1]$ for all $x\in \{0,1\}^k$.
\end{lemma}

\begin{proof}
    Let $d \coloneqq \lceil (\pi/2) \cdot \sqrt{k}\rceil$. Let $T_d$ be the degree-$d$ Chebyshev polynomial of the first kind.  Then, define $r\in \mathbb{R}[Z]$ and $q\in \mathbb{R}[X_1,\dots,X_k]$ by
    \begin{equation}
        r \coloneqq \frac{1-T_d(1-(1-\cos(\pi/d))Z)}{2} \quad \text{and} \quad q \coloneqq r(X_1+\cdots+X_k).
    \end{equation}

    By definition, $T_d(\cos \theta) = \cos(d\theta)$ for all real $\theta$. Therefore, $r(0)=0$ and $r(1)=1$, so $q(0^k)=0$ and $q(e_j)=1$ for all $j\in [k]$. Moreover, $\abs{T_d(x)}\leq 1$ for all $x\in [-1,1]$, so $r(z) \in [0,1]$ for all $z\in [0, 2/(1-\cos(\pi/d))]$. Using $1-\cos(\theta)\leq \theta^2/2$ for all real $\theta$, we deduce
    \begin{equation}
        \frac{2}{1-\cos(\pi/d)}\geq \frac{4d^2}{\pi^2}\geq k.
    \end{equation}
    Hence $r(z) \in [0,1]$ for all $z\in [0,k]$ and therefore $q(x)\in [0,1]$ for all $x\in \{0,1\}^k$.
\end{proof}

We will also need the following fact about multilinear polynomials.

\begin{fact}[Multilinear maximum principle]\label{fact:multilinear_max}
    Let $p\in \mathbb{R}[X_1,\dots,X_n]$. Suppose $p$ is  multilinear, then
    \begin{equation}
        \max_{x\in \{0,1\}^n} |p(x)|=\max_{\mu\in [0,1]^n} |p(\mu)|.
    \end{equation}
\end{fact}

\begin{proof}
It suffices to prove $\max_{\mu\in [0,1]^n} |p(\mu)| \leq \max_{x\in \{0,1\}^n} |p(x)|$ as the reverse inequality is clear. Fix $\mu\in [0,1]^n$. Let $\ber_{\mu}$ denote the distribution on $\{0,1\}^n$ where the $i$th bit is sampled to be $1$ with probability $\mu_i$ independently. Since $p$ is multilinear, we have $p(\mu)=\expect_{x\sim \ber_{\mu}}[p(x)]$. Therefore,
\begin{equation}
    |p(\mu)| = \bigl| \expect_{x\sim \ber_{\mu}}[p(x)] \bigr|\leq \expect_{x\sim \ber_{\mu}}[|p(x)|]\leq \max_{x\in \{0,1\}^n} |p(x)|.
\end{equation}
Since this holds for all $\mu\in [0,1]^n$, we obtain $\max_{\mu\in [0,1]^n} |p(\mu)|\leq \max_{x\in \{0,1\}^n} |p(x)|$, as required.
\end{proof}

\begin{theorem}\label{thm:fbs_leq_sdeg}
    For every nonconstant Boolean function $f$,
    \begin{equation}
        \fbsmindown(f)\leq \frac{\pi^2}{2}\sdeg(f)^2.
    \end{equation}
\end{theorem}

\begin{proof}
    Let $f\colon \{0,1\}^n\to \{0,1\}$.
    Let $p$ be a sign representation of $f$ with $\deg(p) = \sdeg(f)$. Let $z\in \{0,1\}^n$ be such that $\abs{p(z)}$ is maximized.
    
    Fix an arbitrary positive integer $k$. Let $q \in \mathbb{R}[X_1,\dots,X_k]$ be the polynomial from \cref{lem:approx_or}. Write $q^{(0)}$ for $q$ and $q^{(1)}$ for $1-q$. 
    
    Then define an $(nk)$-variate polynomial $r\in \mathbb{R}[X_{1,1},\dots,X_{1,k},\dots,X_{n,1},\dots,X_{n,k}]$ by
    \begin{equation}
    r \coloneqq p\bigl( \, q^{(z_1)}(X_{1,1},\dots,X_{1,k}),\dots,q^{(z_n)}(X_{n,1},\dots,X_{n,k}) \, \bigr),
    \end{equation}
    and define $g\colon \{0,1\}^{nk} \to \{0,1\}$ by $g(x)=1$ if and only if $r(x) < 0$. For convenience, write $\mathbf{0} \coloneqq 0^{nk}$.
    
    From these definitions, it is clear that $r(\mathbf{0}) = p(z)$ and that
    \begin{equation}\label{eq:r_sign_deg}
        \deg(r)\leq \deg(p)\cdot \deg(q)=\sdeg(f)\cdot \left\lceil \frac{\pi}{2}\sqrt{k}\right\rceil.
    \end{equation}
    
    Since $p$ is multilinear, and $q(x) \in [0,1]$ and $1-q(x)\in [0,1]$ for all $x\in \{0,1\}^k$, \cref{fact:multilinear_max} implies $\max_{v\in \{0,1\}^{nk}}\abs{r(v)}\leq \abs{p(z)}$ and equality holds at $v=\mathbf{0}$. Therefore, by the same argument used in the proof of \cref{lem:sdeg_lower}, we have
    \begin{equation}\label{eq:sdeg_bs}
        \bs_{\mathbf{0}}(g)\leq 2\deg(r)^2.
    \end{equation}
    
    We now show that \begin{equation}\label{eq:fbs_leq_bs}
        k\cdot\fbs_{z}(f)-2^n\leq \bs_{\boldzero}(g).
    \end{equation}
    
    Let $\mathcal{B}$ be the collection of sensitive blocks of $f$ at $z$. Let $\{w_B\}_{B\in \mathcal{B}}$ be an optimal solution to the linear program defining $\fbs_z(f)$. Define integer $W_B \coloneqq \lfloor k\cdot w_B\rfloor$ for each $B\in \mathcal{B}$. Then,
    \begin{equation}\label{eq:wb_lower}
        \sum_{B\in \mathcal{B}} W_B\geq \sum_{B\in \mathcal{B}}(k\cdot w_B-1)= k\cdot \fbs_z(f)-|\mathcal{B}|\geq k\cdot \fbs_z(f)-2^n.
    \end{equation}

    For each $B\in \calB$, we can associate one or more sensitive blocks of $g$ at $\mathbf{0}$ as follows. Write $B = \{i_1,\dots,i_m\}$. Then associate the block $B_{\lift} \coloneqq \{(i_1,j_1),\dots,(i_m,j_m)\}$ for any choice of $j_1,\dots,j_m\in [k]$. Using the definition of $q$, we see that $r(\mathbf{0}^{B_{\lift}}) = p(z^B)$. Since $p$ sign-represents $f$, and $r$ sign-represents $g$, the block $B_{\lift}$ is a sensitive block of $g$ at $\boldzero$. In fact, for each $B\in \calB$, we can associate $W_B$ sensitive blocks of $g$ at $\boldzero$, forming a collection $\lifts(B)$, such that all blocks in $\cup_{B\in \calB} \, \lifts(B)$ are pairwise disjoint. This is because, for all $i\in [n]$, we have
    \begin{equation}
        \sum_{B\ni i} W_B\leq \sum_{B\ni i}k\cdot w_B\leq k.
    \end{equation}
    Therefore,
    \begin{equation}\label{eq:wb_upper}
        \sum_{B\in \mathcal{B}} W_B\leq \bs_{\boldzero}(g).
    \end{equation}
    Combining \cref{eq:wb_lower,eq:wb_upper} gives \cref{eq:fbs_leq_bs} as claimed. By further combining \cref{eq:fbs_leq_bs} with \cref{eq:r_sign_deg,eq:sdeg_bs}, we obtain
    \begin{equation}
        k\cdot \fbs_z(f)-2^n\leq 2\Bigl(\sdeg(f)\cdot \left\lceil \frac{\pi}{2}\sqrt{k}\right\rceil\Bigr)^2.
    \end{equation}
    Since $k$ is arbitrary, we can divide the preceding equation by $k$ and take the $k\to \infty$ limit to obtain
    \begin{equation}
        \fbs_z(f)\leq \frac{\pi^2}{2}\sdeg(f)^2.
    \end{equation}
    Since $\fbs_{\min}(f)\leq \fbs_z(f)$, we have
    \begin{equation}
        \fbs_{\min}(f)\leq \frac{\pi^2}{2}\sdeg(f)^2.
    \end{equation}
    The theorem follows upon noting that the preceding equation holds for all Boolean functions $f$ and that $\deg_{\pm}(f|_\rho) \leq \deg_{\pm}(f)$ for all restrictions $\rho$.
\end{proof}

\subsection{Putting everything together}

By combining \cref{eq:rcmin_fbs}, \cref{thm:rcmin}, and \cref{thm:fbs_leq_sdeg}, we obtain
\begin{theorem}\label{thm:rdeg_cubed}
    For every $f\colon \{0,1\}^n \to \{0,1\}$, 
    \begin{equation}\label{eq:rdeg_cubed}
        \D(f) \leq O\bigl(\rdeg(f) \, \sdeg(f)^2 \, \log{n}\bigr) \leq O\bigl(\rdeg(f)^3 \, \log n\bigr).
    \end{equation}
\end{theorem}

\begin{remark}
    The first inequality in \cref{eq:rdeg_cubed} can be tight up to a $\log n$ factor as witnessed by the majority function. It may also be difficult to significantly improve $\D(f) \leq O(\rdeg(f)^3 \log n)$ since the best-known upper bound on $\D(f)$ by even $\deg(f)$ is $\D(f) \leq O(\deg(f)^3)$ \cite{ddeg_midrijanis_2004}.
\end{remark}

\cref{thm:rdeg_cubed} is asymptotically stronger than \cref{thm:D_upper} because of the next theorem. To prove the theorem, we need the notion of influence. For $f\colon \{0,1\}^n \to \{0,1\}$ and $i\in [n]$, the $i$th \emph{influence} of $f$ is defined by $\Inf_i[f] \coloneqq \Pr[f(x) \neq f(x^i)]$, where $x^i$ is $x$ with the $i$th bit flipped, and the probability is over uniformly random $x\in \{0,1\}^n$. The \emph{total influence} of $f$ is defined by $\Inf[f]\coloneqq\sum_{i=1}^n \Inf_i[f]$.
    
\begin{theorem}\label{thm:logn}
    For every $f\colon \{0,1\}^n \to \{0,1\}$ that depends on all $n$ variables, $\rdeg(f) \geq \Omega(\log n)$.
\end{theorem}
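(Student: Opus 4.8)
The plan is to sandwich the total influence $\Inf[f]$ between a lower bound that depends only on $\rdeg(f)$ and an upper bound that depends only on $\deg(f)$, and then invoke $\deg(f)\le 16\rdeg(f)^4$ from \cref{cor:final}. Concretely, I will prove the chain $n/2^{2\rdeg(f)}\le \Inf[f]\le \deg(f)\le 16\,\rdeg(f)^4$. Since $f$ depends on all $n$ variables it is nonconstant, so $\rdeg(f)\ge 1$, and the chain rearranges to $n\le 16\,\rdeg(f)^4\cdot 2^{2\rdeg(f)}=2^{O(\rdeg(f))}$, which gives $\rdeg(f)\ge\Omega(\log n)$, as desired.

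For the lower bound $\Inf[f]\ge n/2^{2\rdeg(f)}$, fix $i\in[n]$; since $f$ depends on the $i$th variable, $\Inf_i[f]>0$. By \cref{fact:rdeg_ndeg}, choose multilinear nondeterministic representations $p$ of $f$ and $q$ of $\neg f$ with $\deg(p)=\ndeg(f)\le\rdeg(f)$ and $\deg(q)=\ndeg(\neg f)\le\rdeg(f)$. Set $g(x)\coloneqq p(x)q(x^i)-p(x^i)q(x)$ and multilinearize using $X_j^2=X_j$. A case analysis over the four possibilities for $(f(x),f(x^i))$ shows $g(x)\ne 0$ if and only if $f(x)\ne f(x^i)$: if $f(x)=1$ and $f(x^i)=0$ then $p(x)\ne 0$, $p(x^i)=0$ and $q(x^i)\ne 0$, so only the first term survives; if $f(x)=0$ and $f(x^i)=1$ then symmetrically only the second term survives; and if $f(x)=f(x^i)$ then both terms vanish. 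The substitution $X_i\mapsto 1-X_i$ preserves degree and multilinearization never increases it, so $\deg(g)\le\deg(p)+\deg(q)\le 2\rdeg(f)$. Because $f$ depends on $i$, $g$ is not identically zero, so \cref{thm:ns94} gives $\Inf_i[f]=\Pr_x[f(x)\ne f(x^i)]=\Pr_x[g(x)\ne 0]\ge 2^{-\deg(g)}\ge 2^{-2\rdeg(f)}$. Summing over all $i\in[n]$ yields $\Inf[f]\ge n/2^{2\rdeg(f)}$.

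For the upper bound I use the classical inequality $\Inf[f]\le\deg(f)$: passing to the corresponding $\pm 1$-valued function leaves each $\Inf_i$ and the degree unchanged, and Parseval gives $\Inf[f]=\sum_S |S|\,\widehat{f}(S)^2\le\deg(f)\sum_S\widehat{f}(S)^2=\deg(f)$ (see, e.g., \cite{analysis_odonnell_2014}). Plugging in $\deg(f)\le 16\rdeg(f)^4$ from \cref{cor:final} completes the chain. (One could instead route the last step through $D(f)$, or use \cref{cor:final_ndeg} for a slightly better constant, without affecting the asymptotics.)

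The main obstacle is choosing the right intermediary. The naive route --- bounding $\Inf_i[f]\ge 2^{-\deg(f)}$ using the polynomial $f(x)-f(x^i)$, then $\deg(f)\le 16\rdeg(f)^4$ --- only yields $\rdeg(f)\ge\Omega((\log n)^{1/4})$, because it buries the polynomial relationship inside an exponent. The key idea is the product polynomial $g=p(x)q(x^i)-p(x^i)q(x)$ built from nondeterministic representations of both $f$ and $\neg f$, which certifies $\Inf_i[f]$ directly in terms of $\rdeg(f)$ at the cost of only a factor of $2$ in the exponent; the only delicate points are the case analysis establishing $g(x)\ne 0\iff f(x)\ne f(x^i)$ and the degree bookkeeping that $\deg(g)\le 2\rdeg(f)$ after the substitution and multilinearization.
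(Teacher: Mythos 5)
Your proof is correct and takes essentially the same approach as the paper's: both lower-bound $\Inf_i[f]$ by applying \cref{thm:ns94} to the degree-$\le 2\rdeg(f)$ polynomial $p(x)q(x^i)-p(x^i)q(x)$, sum over $i$, and close the chain with $\Inf[f]\le\deg(f)\le O(\rdeg(f)^4)$. The only cosmetic difference is that you build $p,q$ as nondeterministic representations of $f$ and $\neg f$ via \cref{fact:rdeg_ndeg} (hence your explicit four-case check that $g(x)\neq 0\iff f(x)\neq f(x^i)$), whereas the paper takes $p/q$ to be a rational representation of $f$ and gets the same equivalence automatically from the nonvanishing denominator; these yield the same numerator polynomial up to the standard correspondence $p/q\leftrightarrow(p,\,q-p)$.
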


\begin{proof}
    Let $p/q\in \mathbb{R}(X_1,\dots,X_n)$ be a rational representation of $f$ such that $\max(\deg(p),\deg(q))=\rdeg(f)$. Fix an arbitrary $i\in [n]$ and define $g \colon \{0,1\}^n \to \{-1,0,1\}$ by 
    \begin{equation}
        g(x) \coloneqq f(x)-f(x^i)=\frac{p(x)}{q(x)}-\frac{p(x^i)}{q(x^i)}=\frac{p(x)q(x^i)-p(x^i)q(x)}{q(x)q(x^i)},
    \end{equation}
    where $x^i$ denotes $x$ with the $i$th bit flipped.

    From the numerator of $g$, we obtain $r\in \mathbb{R}[X_1,\dots,X_n]$ such that $r(x) = p(x)q(x^i)-p(x^i)q(x)$ for all $x\in \{0,1\}^n$ and $\deg(r)\leq \deg(p)+\deg(q)\leq 2\rdeg(f)$. Observe that $g(x)\neq 0$ if and only if $r(x)\neq 0$. Moreover, observe that $g$ is not constantly zero since $f$ depends on variable $i$ by assumption. Therefore, \cref{thm:ns94} gives
    \begin{equation}\label{eq:infi_bound}
        \Inf_i[f]=\Pr[g(x)\neq 0]=\Pr[r(x)\neq 0]\geq 2^{-\deg(r)}\geq 2^{-2\rdeg(f)}.
    \end{equation}
    
    We now sum \cref{eq:infi_bound} over all $i\in [n]$. Combined with \cref{thm:rdeg_cubed} and the fact that $\Inf[f]\leq \deg(f)$, this yields
    \begin{equation}\label{eq:inf_bounds}
        \frac{n}{2^{2\rdeg(f)}}\leq \sum_{i=1}^n \Inf_i[f]=\Inf[f]\leq \deg(f)\leq O(\rdeg(f)^3 \log n).
    \end{equation}
    Rearranging gives $\rdeg(f)\geq \Omega(\log n)$ as required.
\end{proof}

\begin{remark}
    The bound in \cref{thm:logn} is tight for the address function \cite{survey_buhrman_2002}. \cref{eq:inf_bounds} in the proof shows that a Boolean function $f$ with rational degree $d$ can depend on at most $O(d^4 2^{2d})$ variables. It may be possible to improve this bound along the lines of \cite{ninputs_chiarelli_2020,ninputs_wellens_2022}.
\end{remark}

We now present some direct corollaries of \cref{thm:rdeg_cubed} that we find most interesting. The first shows that the $\log n$ factor in \cref{eq:rdeg_cubed} can be replaced by $\log \rdeg(f)$ if the left-hand side is relaxed to $\deg(f)$ or $\s(f)$. Here $\s(f)$ denotes the sensitivity of $f$ which is defined as follows. For $f\colon \{0,1\}^n \to \{0,1\}$ and $x\in \{0,1\}^n$, write $\s_x(f)$ for the size of the set $\{i\in [n]\colon f(x) \neq f(x^i)\}$, where $x^i$ denotes $x$ with the $i$th bit flipped, then $\s(f) \coloneqq \max_{x\in \{0,1\}^n}\s_x(f)$.

\begin{corollary}\label{cor:deg_rdegcubed}
    For every Boolean function $f$,
    \begin{alignat}{2}
        \deg(f) ~\leq&~ O\bigl(\rdeg(f) \, \sdeg(f)^2 \, \log \rdeg(f)\bigr) ~\leq&~ \tO\bigl(\rdeg(f)^3\bigr),\label{eq:deg_rdegcubed}
        \\
        \s(f) ~\leq&~ O\bigl(\rdeg(f) \, \sdeg(f)^2 \, \log \rdeg(f)\bigr) ~\leq&~ \tO\bigl(\rdeg(f)^3\bigr).\label{eq:s_rdegcubed}
    \end{alignat}
\end{corollary}

\begin{proof}
    Consider \cref{eq:deg_rdegcubed} first. As $\rdeg(f)\sdeg(f)^2\log \rdeg(f) \leq \tO(\rdeg(f)^3)$, it suffices to prove the first inequality. Let $p$ be the polynomial representation of $f$ and let $M$ be a monomial of $p$ with $\abs{M} = \deg(f)$. Let $f|_M$ denote an arbitrary restriction of $f$ to $M$. Note that $f|_M$ is defined on $\deg(f)$ variables. Applying \cref{thm:rdeg_cubed} to $f|_M$ gives
    \begin{equation}
        \deg(f)=\deg(f|_M)\leq \D(f|_M)\leq O\bigl(\rdeg(f|_M)\cdot \sdeg(f|_M)^2 \cdot \log \deg(f)\bigr).
    \end{equation}
    \cref{eq:deg_rdegcubed} follows from the facts that $\rdeg(f|_M)\leq \rdeg(f)$ and $\sdeg(f|_M) \leq \sdeg(f)$.\footnote{If $a/\log a \leq O(b)$, there is a constant $C$ such that $a\le Cb\log a$ for all large $a$. Taking logarithms gives $\log a\le \log b+\log\log a+O(1)$, hence $\log a \leq O(\log b)$; substituting back gives $a\le Cb\log a \leq O(b\log b)$.}

    Now consider \cref{eq:s_rdegcubed}. Let $f\colon \{0,1\}^n \to \{0,1\}$ and $x\in \{0,1\}^n$ be such that $\s(f) = \s_x(f)$. Let $S\coloneqq \{i\in [n]\colon f(x) \neq f(x^i)\}$. Let $f|_S$ denote the restriction of $f$ to $S$ by fixing variables outside $S$ according to $x$. Since $\abs{S} = \s_x(f) = \s(f)$, $f|_S$ is defined on $\s(f)$ variables. Applying \cref{thm:rdeg_cubed} to $f|_S$ gives
    \begin{equation}
       \s(f)=\s(f|_S)\leq \D(f|_S)\leq O\bigl(\rdeg(f|_S)\cdot \sdeg(f|_S)^2 \cdot \log \s(f)\bigr).
    \end{equation}
    \cref{eq:s_rdegcubed} follows from the facts that $\rdeg(f|_S)\leq \rdeg(f)$ and $\sdeg(f|_S) \leq \sdeg(f)$.
\end{proof}

\begin{remark}
    One may attempt a similar proof strategy to show $\D(f) \leq \tO(\rdeg(f)^3)$. For this to work, it suffices to show that $\D(f)$ satisfies the following \emph{hardness-condensation} property: for every Boolean function $f$, there exists a restriction $f'$ of $f$ to $\mathrm{poly}(\D(f))$ variables such that $\D(f) \leq O(\D(f'))$. Hardness condensation has been studied previously in \cite{condensation_goos_2024} for example.
\end{remark}

By adapting the proof of \cref{thm:rdeg_cubed}, we can obtain the following corollary.

\begin{corollary}\label{cor:final_sdeg}
    For every Boolean function $f$, 
    \begin{equation}\label{eq:final_fdeg_sdeg}
        \deg(f)\leq \tO\bigl(\deg_{\mathbb{F}_2}(f) \, \sdeg(f)^2\bigr),
    \end{equation}
    where $\deg_{\mathbb{F}_2}(f)$ is the minimum degree of $p\in \mathbb{F}_2[X_1,\dots,X_n]$ such that $\forall x \in \{0,1\}^n, \, p(x) = f(x)$.
\end{corollary}

\begin{proof}[Proof sketch.] \cref{thm:rcmin} still holds with $\rdeg(f)$ replaced by $\deg_{\mathbb{F}_2}(f)$, as \cref{lem:nisan-smolensky} holds when $p$ is a polynomial representation of $f$ over $\mathbb{F}_2$.
\end{proof}

\begin{remark}
    We find \cref{cor:final_sdeg} surprising because both $\deg_{\mathbb{F}_2}(f)$ and $\sdeg(f)$ are not polynomially related to $\deg(f)$, as witnessed by the parity and majority functions, respectively. This is reminiscent of how neither $\C_0(f)$ nor $\C_1(f)$ are polynomially related to $\deg(f)$ yet $\deg(f) \leq \C_0(f) \, \C_1(f)$, where $\C_b(f) \coloneqq \max_{x\in f^{-1}(b)}\C_x(f)$ denotes the $b$-certificate complexity. But unlike $\C_0(f)$ and $\C_1(f)$, both $\deg_{\mathbb{F}_2}(f)$ and $\sdeg(f)$ stay invariant under negating $f$.
\end{remark}

The next two corollaries concern $\ndeg(f)$ and $\ndeg(\neg f)$.

\begin{corollary}\label{cor:final_ndeg}
    For every $f\colon \{0,1\}^n \to \{0,1\}$, 
    \begin{equation}\label{eq:final_ndeg}
        \D(f)\leq O\bigl(\ndeg(f)^{1.5} \, \ndeg(\neg f)^{1.5} \, \log n\bigr) \quad \text{and} \quad  \D(f)\leq O\bigl(\ndeg(f)^2 \, \ndeg(\neg f)^2 \bigr).
    \end{equation}
\end{corollary}

\begin{proof} The first inequality follows from \cref{thm:rdeg_cubed} since $\sdeg(f) \leq O(\min(\ndeg(f),\ndeg(\neg f)))$ and $\rdeg(f) = \max(\ndeg(f),\ndeg(\neg f))$. The second inequality additionally uses \cref{thm:logn}.
\end{proof}

\begin{corollary}\label{cor:final_deg_ndeg}
    For every Boolean function $f$,
    \begin{equation}\label{eq:final_deg_ndeg}
        \deg(f)\leq O\bigl(\ndeg(f)^{1.5} \, \ndeg(\neg f)^{1.5} \, \log\rdeg(f)\bigr) \leq \tO\bigl(\ndeg(f)^{1.5} \, \ndeg(\neg f)^{1.5}\bigr).
    \end{equation}
\end{corollary}

\begin{proof} The corollary follows from \cref{cor:final_ndeg} by the same argument as in the proof of \cref{cor:deg_rdegcubed}.
\end{proof}

\section{Implications and open problems}
\label{sec:implications}

\subsection{Effective Hypercube Nullstellensatz}\label{sec:nullstellensatz}

The main result of this work can be framed as an effective Hypercube Nullstellensatz.

\begin{theorem}[Effective Hypercube Nullstellensatz]\label{null_cube}
    Let $g_1, g_2 \in \mathbb{R}[X_1,\dots,X_n]$. Suppose $g_1$ and $g_2$ do not share any common zeros on the hypercube $\{0,1\}^n$. Further suppose $g_1(x) \cdot g_2(x) = 0$ for all $x\in\{0,1\}^n$. Then there exist $h_1, h_2\in \mathbb{R}[X_1, \dots, X_n]$ such that
    \begin{equation}\label{eq:bezout}
        h_1(x) g_1(x) + h_2(x) g_2(x) = 1 \quad \text{for all $x\in \{0,1\}^n$},
    \end{equation}
    and
    \begin{equation}
        \max\bigl(\deg(\overline{h_1 g_1}), \deg(\overline{h_2 g_2})\bigr) \leq \tO\bigl(\deg(g_1)^{1.5} \deg(g_2)^{1.5}\bigr),
    \end{equation}
    where the overline denotes multilinearization using the relations $X_1^2=X_1,\dots,X_n^2=X_n$.
\end{theorem}

\begin{proof}
    Construct polynomials $h_1, h_2$ satisfying \cref{eq:bezout} by interpolation. We proceed to bound $\max(\deg(\overline{h_1 g_1}), \deg(\overline{h_2 g_2}))$. Define $f\colon \{0,1\}^n \to \{0,1\}$ by $f(x) = 0$ if and only if $g_1(x) = 0$. Then by the hypotheses of the theorem, $g_1, g_2$ are nondeterministic representations of $f$ and $\neg f$, respectively. For every $x\in \{0,1\}^n$, we have $h_1(x) g_1(x) = f(x)$ and $h_2(x) g_2(x) = \neg f(x)$. Therefore by uniqueness of multilinear representation, we deduce $\max(\deg(\overline{h_1 g_1}),\deg(\overline{h_2 g_2}))=\max(\deg(f), \deg(\neg f))$, and the theorem follows from \cref{cor:final_deg_ndeg}.
\end{proof}

In view of existing Nullstellensatz results, in particular \cite{effective_jelonek_2005}, we conjecture that a natural generalization of \cref{null_cube} to any number of polynomials holds.

\begin{conjecture}
    For all integers $m\geq 2$, the following holds. Let $0\neq g_1, \dots, g_m\in \mathbb{R}[X_1,\dots,X_n]$. Suppose $g_1, \dots, g_m$ do not share any common zeros on the hypercube $\{0,1\}^n$. Further suppose $g_1(x) \cdots g_m(x) = 0$ for all $x\in \{0,1\}^n$. Then there exist $h_1,\dots, h_m\in \mathbb{R}[X_1,\dots,X_n]$ such that
    \begin{equation}
        h_1(x) g_1(x)+\cdots + h_m(x) g_m(x)=1 \quad \text{for all $x\in \{0,1\}^n$},
    \end{equation}
    and
    \begin{equation}
        \max_{i\in [m]}(\deg(\overline{h_i g_i}))\leq \mathrm{poly}(\deg(g_1),\dots,\deg(g_m)).
    \end{equation}
\end{conjecture}

Again in view of existing Nullstellensatz results, a possible strengthening of the conjecture would have the condition ``$g_1(x) \cdots g_m(x)=0$ for all $x\in \{0,1\}^n$'' removed. Such a conjecture would mean that an effective Nullstellensatz holds for all subsets of the hypercube. However, this conjecture is false, even when $m=2$, as we demonstrate below.\footnote{In computer science language, \cref{fact:counter_example} shows that rational degree could be much smaller than degree for partial Boolean functions $f$, even when the (rational) polynomial representation is not required to be bounded outside the domain of $f$. Such separation cannot be shown by the ``Boolean Imbalance'' function of \cite{rdeg_iyer_2025}, or others like it.}

\begin{fact}\label{fact:counter_example}
    There exist $g_1,g_2 \in \mathbb{R}[X_1,\dots,X_n,Y_1,\dots,Y_n]$ each of degree $1$ that do not share any common zeros on $\{0,1\}^{2n}$ such that: if $h_1, h_2 \in \mathbb{R}[X_1,\dots,X_n,Y_1,\dots,Y_n]$ satisfy 
    \begin{equation}\label{eq:bezout_copy}
        h_1(x) g_1(x) + h_2(x) g_2(x) = 1 \quad \text{for all $x\in \{0,1\}^{2n}$},
    \end{equation}
    then 
    \begin{equation}
        \max\bigl(\deg(\overline{h_1 g_1}), \deg(\overline{h_2 g_2})\bigr) \geq n.
    \end{equation}
\end{fact}

\begin{proof} We give an explicit construction.
    Let $g_1, g_2\in \mathbb{R}[X_1,\dots,X_n,Y_1,\dots,Y_n]$ be defined by
    \begin{equation}
        g_1\coloneqq X_1+\cdots+X_n \quad \text{and} \quad g_2 \coloneqq X_1+\cdots+X_n+Y_1+\cdots+Y_n - (n+1).
    \end{equation}
    Clearly, $\deg(g_1) = \deg(g_2) = 1$, and $g_1, g_2$ do not share any common zeros on $\{0,1\}^{2n}$. 
    
    Let disjoint sets $D_0,D_1 \subseteq \{0,1\}^n \times \{0,1\}^n$ be defined by
    \begin{align}
         D_0~\coloneqq&~ \{(x,y)\in \{0,1\}^n \times \{0,1\}^n\colon g_1(x,y)=0\},
         \\
         D_1~\coloneqq&~\{(x,y)\in \{0,1\}^n \times \{0,1\}^n\colon g_2(x,y)=0\}.
    \end{align}
    Let $D \coloneqq D_0 \sqcup D_1$ and define $f\colon D\to \{0,1\}$ by $f(x,y)=0$ if and only if $(x,y)\in D_0$. 
    
    We claim that if $p\in \mathbb{R}[X_1,\dots,X_n,Y_1,\dots,Y_n]$ satisfies $p(x,y)=f(x,y)$ for all $(x,y) \in D$, then $\deg(p)\geq n$. This directly implies that, if $h_1, h_2$ satisfy \cref{eq:bezout_copy}, then $\deg(\overline{h_1  g_1})\geq n$.

    We prove the claim using Minsky-Papert symmetrization (\cref{thm:symmetrization}). By symmetrizing $p$, first with respect to the $X_i$s and then the $Y_i$s, we obtain $P\in \mathbb{R}[S,T]$ such that $\deg(P) \leq \deg(p)$ and 
    \begin{align}
        P(0,t)~&=~0, && \text{for all $t\in \{0,1,\dots, n\}$};\label{eq:root_overflow_1}
        \\
        P(s,n+1-s)~&=~1, && \text{for all $s\in \{1,\dots, n\}$}.\label{eq:root_overflow_2}
    \end{align}
    We then perform case analysis based on the degree of the univariate polynomial $P(0,T)\in \mathbb{R}[T]$:
    \begin{enumerate}
        \item Case $\deg(P(0,T)) \geq n$. Then  $\deg(p) \geq \deg(P) \geq \deg(P(0,T)) \geq n$ and the claim holds.
        \item Case $\deg(P(0,T)) < n$. Since $P(0,T)$ has at least $n+1$ roots by \cref{eq:root_overflow_1} and degree less than $n$, it must be the zero polynomial. Therefore, the polynomial $P(S,n+1-S)-1$ is not identically zero, since $P(0,n+1)-1=-1$. Moreover, by \cref{eq:root_overflow_2}, we see that $P(S,n+1-S)-1$ has at least $n$ roots.  Therefore, $\deg(P(S,n+1-S)-1)\geq n$. This shows that
        \begin{equation}
            n\leq \deg(P(S,n+1-S)-1)\leq \deg(P)\leq \deg(p),
        \end{equation}
        and the claim holds.
    \end{enumerate}
    Since the claim holds in either case, the fact follows.
\end{proof}

\subsection{Gotsman-Linial conjecture}

The long-standing Gotsman-Linial conjecture~\cite{spectral_gotsman_1994} posits that for every $f\colon \{0,1\}^n \to \{0,1\}$,  $\Inf[f] \leq O(\sqrt{n} \sdeg(f))$. We conjecture the following, which is weaker since $\sdeg(f)/2 \leq \ndeg(f)$.
\begin{conjecture}\label{conj:gotsman_linial}
    For every $f\colon \{0,1\}^n \to \{0,1\}$, $\Inf[f]\leq O(\sqrt{n} \ndeg(f))$.
\end{conjecture}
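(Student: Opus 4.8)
The plan is to recast \cref{conj:gotsman_linial} as an edge-isoperimetric inequality for zero sets of low-degree polynomials, to settle the case $\ndeg(f) = 1$ unconditionally, and to isolate the combinatorial statement on which the general case turns.

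Let $p$ be a nondeterministic representation of $f$ with $\deg(p) = \ndeg(f) \eqqcolon d$, so that $f^{-1}(0)$ is exactly the zero set $Z \coloneqq \{x \in \{0,1\}^n : p(x) = 0\}$. Writing $\partial Z$ for the set of edges of the cube with exactly one endpoint in $Z$, we have $\Inf[f] = (2/2^n)\abs{\partial Z}$, so the conjecture is equivalent to the assertion that the zero set of a degree-$d$ multilinear polynomial on $\{0,1\}^n$ satisfies $\abs{\partial Z} = O(\sqrt{n}\,d\,2^n)$. As the excerpt notes, this is weaker than Gotsman--Linial \cite{spectral_gotsman_1994} since $\sdeg(f) \le 2\ndeg(f)$, but it is already tight, as witnessed by the function of \cref{rem:sdeg_lower}.

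The case $d = 1$ can be handled directly. Here $Z = \{x : \sum_j c_j x_j = t\}$ for some reals $c_j, t$; discarding coordinates $j$ with $c_j = 0$, on which $f$ does not depend, we may assume all $c_j \neq 0$. A direction-$i$ edge, joining the two extensions of some $y \in \{0,1\}^{[n]\setminus\{i\}}$ by $x_i \in \{0,1\}$, is sensitive only if $p$ vanishes at (at least) one of those extensions, i.e.\ only if $y$ lies in one of the affine slices $\{p|_{x_i=0} = 0\}$, $\{p|_{x_i=1}=0\}$ of $\{0,1\}^{n-1}$; by the Erd\H{o}s--Littlewood--Offord inequality each such slice has at most $\binom{n-1}{\lfloor(n-1)/2\rfloor} = O(2^n/\sqrt{n})$ points. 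Summing over the $n$ directions gives $\abs{\partial Z} = O(\sqrt{n}\,2^n)$, so $\Inf[f] = O(\sqrt{n})$. The same reduction shows that for general $d$ it suffices to bound, summed over $i \in [n]$, the number of $y$ at which \emph{exactly one} of $p|_{x_i=0}(y), p|_{x_i=1}(y)$ vanishes, i.e.\ the symmetric difference of the zero sets of two degree-$\le d$ polynomials differing by $\partial_i p$. For $d \ge 2$ the union bound used above breaks down: by \cref{thm:ns94}, which is tight (e.g.\ for $x_1 \cdots x_d$), a single degree-$d$ zero set can already fill a $(1 - 2^{-d})$-fraction of the cube, so $\{p|_{x_i=0} = 0\}$ alone may be far larger than $O(2^n/\sqrt{n})$. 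One must instead show that these per-coordinate symmetric differences are small \emph{on average over the coordinates} -- a Gotsman--Linial-type cancellation, now for exact zero sets rather than sign representations.

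A natural alternative -- a random restriction keeping each variable alive independently with probability $\rho$, under which $\ndeg$ does not increase and $\expect[\Inf[f']] = \rho\,\Inf[f]$ -- only reduces the conjecture to its own base case $n = O(\ndeg(f)^2)$, since the trivial estimate $\Inf[f'] \le (\text{number of live variables})$ recovers merely $\Inf[f] \le n$. The main obstacle, common to both routes, is that the symmetrization-and-Markov machinery of \cref{sec:prelims} extracts sensitivity information from the \emph{single} input maximizing $\abs{p}$, as in \cref{lem:sdeg_lower}, whereas total influence is a cube-wide average with no comparable anchor at generic points of $Z$, where $p \equiv 0$; passing to $p^2$ does not help, since \cref{cor:approx}(iii) would then force the $\ell_\infty$-maximizer of $p^2$ to be surrounded by $Z$, which fails already for the function of \cref{rem:sdeg_lower}. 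Progress thus seems to require either a genuinely global refinement of symmetrization that tracks how $p$ grows away from $Z$ along many coordinates at once, or structural (regularity, critical-index, invariance) techniques from the polynomial-threshold-function average-sensitivity literature, adapted to the rigidity of exact zero sets. As intermediate targets we suggest the case $\ndeg(f) \le 2$, where $Z$ is cut out by a quadratic and anti-concentration for quadratic forms is available, and functions with a coordinate-transitive symmetry group, where Minsky--Papert symmetrization (\cref{thm:symmetrization}) reduces the problem to a univariate inequality governed by \cref{thm:markov}.
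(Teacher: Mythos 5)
This statement is an open \emph{conjecture} in the paper, not a theorem; the authors offer no proof of it, so there is no argument to compare yours against. You are explicit that you are not proving it either: you recast it as an edge-isoperimetric bound for zero sets of low-degree multilinear polynomials, settle the base case $\ndeg(f)=1$, and discuss why your argument does not extend.

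Your $d=1$ case is correct. After discarding coordinates with zero coefficient (on which $f$ is constant), each restricted slice $\{y \in \{0,1\}^{n-1} : p|_{x_i=b}(y)=0\}$, $b\in\{0,1\}$, is the solution set of a linear form with all coefficients nonzero, so the Erd\H{o}s--Littlewood--Offord theorem bounds its size by $\binom{n-1}{\lfloor(n-1)/2\rfloor}=O(2^n/\sqrt{n})$; summing over the $n$ directions gives $\abs{\partial Z}=O(\sqrt{n}\,2^n)$ and hence $\Inf[f]=2\abs{\partial Z}/2^n=O(\sqrt{n})$, matching the conjectured bound with $d=1$ and tight for the $\abs{x}=n/2$ indicator of \cref{rem:sdeg_lower}. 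Your diagnosis of why the per-direction union bound collapses for $d\geq 2$ is also accurate: since \cref{thm:ns94} is tight, a single degree-$d$ zero set can occupy a $(1-2^{-d})$ fraction of the cube, so one genuinely needs cancellation \emph{across} coordinates rather than a uniform per-coordinate bound --- precisely the Gotsman--Linial-type difficulty. Your further remarks (that $\rho$-random restrictions only reduce to $n=O(d^2)$, that the symmetrization-plus-Markov machinery in the paper is anchored at a single $\ell_\infty$-maximizer of $\abs{p}$ and so cannot see the cube-wide average, and that passing to $p^2$ breaks hypothesis (iii) of \cref{cor:approx}) are all sound and correctly delimit what the paper's tools can and cannot do. None of this, however, resolves \cref{conj:gotsman_linial}, which remains open; the two directions you name at the end (anti-concentration for quadratics when $d\le 2$, and \cref{thm:symmetrization} for transitive-symmetric $f$) are reasonable next steps but are not carried out here.
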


The motivation for this conjecture is rooted in the origins of this work, namely the observation that existing results on the Gotsman-Linial conjecture \cite{influence_diakonikolas_2014,bounding_harsha_2014,correct_kane_2014} together with an algebraic argument imply $\rdeg(f) \geq \Omega(\sqrt{\log n})$ for every Boolean function $f$ that depends on $n$ variables.\footnote{For example, \cite[Theorem 1.1]{influence_diakonikolas_2014} shows $\Inf[f] \leq 2^{O(\sdeg(f))} \cdot \log n \cdot n^{1-1/(4\sdeg(f)+2)}$, which implies that the same inequality holds with $\sdeg(f)$ replaced by $2\rdeg(f)$. Combining this with  $n/2^{2\rdeg(f)} \leq \Inf[f]$ --- see proof of \cref{thm:logn} --- yields $\rdeg(f) \geq \Omega(\sqrt{\log n})$. (Also note that \cite{influence_diakonikolas_2014} uses the notation ``$\AS$'' for $\Inf$.)} Proving that $\Inf[f] \leq O(\sqrt{n} \ndeg(f))$ would yield $\rdeg(f) \geq \Omega(\log n)$. We deferred pursuing \cref{conj:gotsman_linial} after obtaining \cref{thm:D_upper}, since that theorem yields $\rdeg(f) \geq \Omega(\log n)$ more directly, as the proof of \cref{thm:logn} shows.

\subsection{Approximate nondeterministic degree}

For $f\colon \{0,1\}^n \to \{0,1\}$ and $\epsilon\in [0,1)$, define the $\epsilon$-approximate nondeterministic degree of $f$, denoted $\ndeg_{\epsilon}(f)$, to be the minimum degree of a real polynomial $p$ such that, for all $x \in \{0,1\}^n$,
\begin{equation}
\begin{cases}
    \abs{p(x)} \leq \epsilon & \text{if } f(x) = 0, \\[1pt]
    \abs{p(x)} \geq 1 & \text{if } f(x) = 1.
\end{cases}
\end{equation}
It is easy to see that $\ndeg_0(f) = \ndeg(f)$ for all $f$. The value of $\ndeg_\epsilon$ also coincides (up to constants) with ``$\deg_\epsilon^+$'' of \cite{breaking_sherstov_2018} and ``$\widetilde{\odeg}_\epsilon$'' of \cite{apxdeg_bun_2022}.

For constant $\epsilon$, it is easy to adapt the analytic side of our proofs, say \cref{lem:sdeg_lower}, to obtain 
\begin{theorem}\label{thm:apx_version}
For every Boolean function $f$, and constant $\epsilon\in [0,1)$, 
\begin{equation}
    \D(f) \leq O(\ndeg_\epsilon(f)^2 \ndeg(\neg f)^2) \quad \text{and} \quad \D(f) \leq O(\ndeg(f)^2 \ndeg_\epsilon(\neg f)^2).
\end{equation}
\end{theorem}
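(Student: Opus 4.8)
The plan is to replay the proof of \cref{cor:final_ndeg}, replacing only its analytic ingredient. Since $D(f)=D(\neg f)$ (flip the leaf labels of any decision tree for $f$), it suffices to prove the second inequality $D(f)\le O(\ndeg(f)^2\,\ndeg_\epsilon(\neg f)^2)$; the first then follows by applying it with $\neg f$ in place of $f$. Fix an exact nondeterministic representation $p$ of $f$ with $\deg(p)=\ndeg(f)$, and an $\epsilon$-approximate nondeterministic representation $q$ of $\neg f$ with $\deg(q)=\ndeg_\epsilon(\neg f)$. As in \cref{cor:final_ndeg}, we build a decision tree that at each round queries a small hitting set of the current restriction $p^i$ of $p$; each such round strictly lowers $\deg(p^i)$, so the process halts (with $f^i$ constant) after at most $\deg(p)=\ndeg(f)$ rounds, and it remains to bound the number of queries per round, which \cref{lem:hitset_upper} ties to $\min_{x\in (f^i)^{-1}(0)}\bs_x(f^i)$.

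The single new ingredient is an $\epsilon$-tolerant analogue of \cref{cor:approx}, giving this analogue of \cref{lem:sdeg_lower}: \emph{if $q$ is an $\epsilon$-approximate nondeterministic representation of $\neg f$, then $\min_{x\in f^{-1}(0)}\bs_x(f)\le \tfrac{2}{1-\epsilon}\deg(q)^2$.} To prove it, pick $z\in f^{-1}(0)$ maximizing $\abs{q(z)}$ and set $h\coloneqq\abs{q(z)}$; note $h\ge 1$, and since $\abs{q(x)}\le\epsilon<1\le h$ for all $x\in f^{-1}(1)$, this $z$ actually maximizes $\abs{q}$ over $\{0,1\}^n$. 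Let $b\coloneqq\bs_z(f)$ and restrict $q$ to $b$ variables along disjoint sensitive blocks of $f$ at $z$, exactly as in the proof of \cref{lem:sdeg_lower}, to obtain a multilinear $r$ on $\{0,1\}^b$ with $\deg(r)\le\deg(q)$, $\abs{r(u)}\le h$ for all $u$, $\abs{r(0^b)}=h$, and --- because flipping a sensitive block of $f$ at $z$ lands on a point of $f$-value $1$, where $\abs{q}\le\epsilon$ --- the relaxed condition $\abs{r(u)}\le\epsilon$ for every $u$ with $\abs{u}=1$. Now run the proof of \cref{cor:approx}: writing $r=a_0+\sum_i a_i X_i+(\text{higher degree terms})$ with $a_0=r(0^b)=\pm h$, the bounds $\abs{a_0+a_i}\le\epsilon$ force each $a_i$ to have sign opposite to $a_0$ with $\abs{a_i}\ge h-\epsilon$; hence the symmetrized $R\in\mathbb{R}[Y]$ of \cref{fact:bernoulli_symmetrization} satisfies $\abs{R(y)}\le h$ on $[0,1]$ while $\abs{R'(0)}=\abs{\sum_i a_i}\ge b(h-\epsilon)$, and \cref{thm:markov} gives $b(h-\epsilon)\le 2h\deg(R)^2$. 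Since $h\ge 1>\epsilon$ implies $h/(h-\epsilon)\le 1/(1-\epsilon)$, we conclude $b\le\tfrac{2h}{h-\epsilon}\deg(R)^2\le\tfrac{2}{1-\epsilon}\deg(q)^2$.

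Feeding this into the tree is routine. The restriction $q^i$ of $q$ by the bits queried so far is still an $\epsilon$-approximate nondeterministic representation of $\neg f^i$ with $\deg(q^i)\le\deg(q)$, so the bound just shown, applied to $f^i$, yields $\min_{x\in (f^i)^{-1}(0)}\bs_x(f^i)\le\tfrac{2}{1-\epsilon}\ndeg_\epsilon(\neg f)^2$. Combined with \cref{lem:hitset_upper}, $p^i$ has a hitting set of size at most $\deg(p^i)\cdot\tfrac{2}{1-\epsilon}\ndeg_\epsilon(\neg f)^2\le\tfrac{2}{1-\epsilon}\ndeg(f)\,\ndeg_\epsilon(\neg f)^2$. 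With at most $\ndeg(f)$ query rounds, the total depth is at most $\tfrac{2}{1-\epsilon}\ndeg(f)^2\,\ndeg_\epsilon(\neg f)^2=O(\ndeg(f)^2\,\ndeg_\epsilon(\neg f)^2)$ for constant $\epsilon$, proving the second inequality.

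The one step requiring care is the $\epsilon$-tolerant Markov estimate: one must verify that relaxing condition (iii) of \cref{cor:approx} from $r(u)\cdot r(0^b)\le 0$ to $\abs{r(u)}\le\epsilon$ weakens the conclusion only by the harmless factor $1/(1-\epsilon)$. This hinges on the normalization $\abs{q(z)}\ge 1$, which is exactly the ``$\ge 1$'' clause of the definition of $\ndeg_\epsilon$: without a lower bound on $h$ the ratio $h/(h-\epsilon)$ is unbounded and the argument collapses. Everything else transcribes the exact-case proofs verbatim, since neither \cref{lem:hitset_upper} nor the tree bookkeeping of \cref{cor:final_ndeg} touches any analytic property of the representations --- it uses only that $p^i$ is an exact nondeterministic representation of $f^i$ and that \emph{some} nondeterministic representation of $\neg f^i$ (now allowed to be merely $\epsilon$-approximate) has small degree.
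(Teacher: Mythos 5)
Your proof is correct and follows exactly the route the paper indicates: adapt the analytic ingredient (\cref{lem:sdeg_lower}/\cref{cor:approx}) to tolerate the $\epsilon$-slack, losing only a factor of $1/(1-\epsilon)$ thanks to the normalization $\abs{q(z)}\ge 1$, and then reuse the combinatorial machinery of \cref{lem:hitset_upper} and the decision tree of \cref{cor:final_ndeg} unchanged. The reduction of the first inequality to the second via $D(f)=D(\neg f)$ is also sound.
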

If we could similarly adapt the combinatorial side of our proofs, say \cref{lem:hitset_upper}, then we would prove the following conjecture. 
\begin{conjecture}
    For every Boolean function $f$, and constant $\epsilon\in [0,1)$,
    \begin{equation}
        \D(f)\leq \mathrm{poly}( \, \max(\ndeg_\epsilon(f),\ndeg_\epsilon(\neg f)) \, ).
    \end{equation}
\end{conjecture}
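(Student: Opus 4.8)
The plan is to retain the two-part structure of our proof of \cref{cor:final_ndeg} and to strengthen only its combinatorial half. The analytic half already generalizes: adapting \cref{lem:sdeg_lower} exactly as in the proof of \cref{thm:apx_version} gives, for every constant $\epsilon\in[0,1)$ and every nonconstant Boolean $f$, a bound $\min_{x\in f^{-1}(0)}\bs_x(f)\le O_\epsilon(\ndeg_\epsilon(\neg f)^2)$: one takes an $\epsilon$-approximate nondeterministic representation $q$ of $\neg f$, picks $z$ maximizing $\abs{q(z)}$ over $\{0,1\}^n$, notes $z\in f^{-1}(0)$ since $\abs{q}\ge 1>\epsilon\ge\abs{q}$ on $f^{-1}(1)$, and runs the \cref{cor:approx}/Markov argument with the block-neighbours of $z$ now merely \emph{small} (absolute value $\le\epsilon$) rather than of opposite sign; since $\abs{q(z)}\ge 1$, the resulting constant is $2/(1-\epsilon)=O_\epsilon(1)$. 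Hence the only missing ingredient is an approximate analogue of \cref{lem:hitset_upper}: \emph{if $p$ is an $\epsilon$-approximate nondeterministic representation of $f$, then $p$ has a hitting set $H$ with $\abs{H}\le O_\epsilon\!\big(\deg(p)\cdot\min_{x\in f^{-1}(0)}\bs_x(f)\big)$.} Given this, the decision tree of \cref{thm:D_upper}, in the one-sided form of \cref{cor:final_ndeg}, runs verbatim: restricting the queried bits turns $p^i$ into an $\epsilon$-approximate nondeterministic representation $p^{i+1}$ of $f^{i+1}$ of strictly smaller degree (a hitting set kills every maxonomial), after at most $\deg(p)\le\ndeg_\epsilon(f)$ rounds $p^i$ is constant --- which forces $f^i$ constant since $\epsilon<1$ --- and the per-round query count is $O_\epsilon(\deg(p^i)\cdot\ndeg_\epsilon(\neg f^i)^2)\le O_\epsilon(\ndeg_\epsilon(f)\,\ndeg_\epsilon(\neg f)^2)$, for a total of $O_\epsilon(\ndeg_\epsilon(f)^2\ndeg_\epsilon(\neg f)^2)$.

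To prove the approximate analogue of \cref{lem:hitset_upper}, I would revisit the single step of its proof that uses exactness: fixing all variables outside a maxonomial $M$ of $p$ to their values in some $x\in f^{-1}(0)$ produces a multilinear $r$ on $\deg(p)$ variables whose top-degree coefficient equals $c_M$, and in the exact case $r\not\equiv 0$ forces $M$ to contain a sensitive block of $f$ at $x$. In the approximate case $r\not\equiv 0$ no longer suffices; one needs some $z$ with $\abs{r(z)}\ge 1$. The tool here is Möbius inversion on the subcube $\{0,1\}^{\deg(p)}$: since $c_M$ is the top-degree coefficient of $r$, it is a signed sum of $2^{\deg(p)}$ values of $r$, so if $\abs{r(z)}\le\epsilon$ for all $z$ then $\abs{c_M}\le 2^{\deg(p)}\epsilon$. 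Contrapositively, \emph{every maxonomial $M$ with $\abs{c_M}>2^{\deg(p)}\epsilon$ contains a sensitive block of $f$ at $x$}, so a maximal family of pairwise disjoint such ``large'' maxonomials has size at most $\min_{x\in f^{-1}(0)}\bs_x(f)$, and its union is a hitting set of the desired size --- but only \emph{for the large maxonomials}.

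The main obstacle, and the reason this is stated only as a conjecture, is the ``small'' maxonomials, those with $\abs{c_M}\le 2^{\deg(p)}\epsilon$: they need not contain any sensitive block, they can be numerous and pairwise disjoint --- for instance $\tfrac{\epsilon}{k}\sum_{i=1}^k\prod_{j\in M_i}(2X_j-1)$ has absolute value $\le\epsilon$ everywhere on the cube yet has $k$ disjoint maxonomials --- and nothing forces a minimum-degree $\epsilon$-approximate representation to avoid them, so a maximal disjoint family of \emph{all} maxonomials (whose union we must query to drop the degree) can be much larger than $\min_{x\in f^{-1}(0)}\bs_x(f)$. I see two plausible routes, neither of which I can currently complete. (i) Show that one may always choose a near-optimal $\epsilon$-approximate nondeterministic representation whose maxonomial coefficients all exceed $2^{\deg(p)}\epsilon$; but naively deleting a small maxonomial perturbs $p$ by up to $2^{\deg(p)}\epsilon$ and inflates the error past $\epsilon$. (ii) Query only a hitting set for the large maxonomials and argue that the surviving degree-$\deg(p)$ part of the restricted polynomial has sup-norm $o(1)$ on the cube, hence can be folded into the approximation error to yield a slightly worse representation of smaller degree; this would follow from a ``cleanup'' lemma bounding $p$ on $f^{-1}(1)$ by an absolute constant in absolute value --- which would bound every coefficient of $p$ by $2^{\deg(p)}$ --- but whether the bounded-above variant of $\ndeg_\epsilon$ is polynomially related to $\ndeg_\epsilon$ is itself unclear, and in any case controlling the \emph{number} of small maxonomials, not merely their individual magnitude, appears to be the true crux. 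Either route, once completed, immediately yields the conjecture through the decision-tree argument above.
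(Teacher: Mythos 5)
This statement is an open conjecture; the paper does not prove it. The sentence immediately preceding it says a proof would follow ``if we could similarly adapt the combinatorial side of our proof, namely \cref{lem:hitset_upper},'' and your diagnosis lands in exactly the same place. Your treatment of the analytic half is precisely the content of \cref{thm:apx_version}: taking an $\epsilon$-approximate nondeterministic representation $q$ of $\neg f$ and choosing $z$ to maximize $\abs{q(z)}$ does put $z\in f^{-1}(0)$ (because $\abs{q}\geq 1$ on $f^{-1}(0)$ while $\abs{q}\leq\epsilon<1$ on $f^{-1}(1)$); the sensitive-block neighbours of $z$ then have $\abs{q}\leq\epsilon$, and the \cref{cor:approx}/Markov computation gives $\min_{x\in f^{-1}(0)}\bs_x(f)\leq \tfrac{2}{1-\epsilon}\,\ndeg_\epsilon(\neg f)^2$, so your stated constant is right. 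The decision-tree argument of \cref{thm:D_upper} and \cref{cor:final_ndeg} would then close the loop, modulo an approximate analogue of \cref{lem:hitset_upper} --- which is exactly what the paper says is missing.

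Your account of why \cref{lem:hitset_upper} resists adaptation is sound and goes further than the paper does in spelling out the obstacle. The M\"obius observation is correct: the coefficient $c_M$ of the maxonomial, viewed as the top coefficient of the restriction $r$ to the $\deg(p)$ coordinates in $M$, equals a $\pm 1$ signed sum of the $2^{\deg(p)}$ values of $r$ on the subcube, so $\abs{c_M}>2^{\deg(p)}\epsilon$ forces some restricted value to exceed $\epsilon$, and hence forces $M$ to contain a sensitive block of $f$ at $x\in f^{-1}(0)$. The surviving difficulty --- many pairwise-disjoint ``small'' maxonomials carrying no sensitivity information --- is real, and your example $\tfrac{\epsilon}{k}\sum_i\prod_{j\in M_i}(2X_j-1)$ cleanly shows a sup-norm-$\epsilon$ multilinear polynomial can have arbitrarily many disjoint maxonomials. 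Both of your repair routes hit genuine obstructions exactly as you describe: deleting one small maxonomial can shift the polynomial by up to $2^{\deg(p)}\epsilon$ in sup-norm, so naive pruning overruns the error budget after few steps; and route (ii) would require some absolute upper bound on $\abs{p}$ over $f^{-1}(1)$, which the definition of $\ndeg_\epsilon$ does not provide, and which even if available would bound coefficient magnitudes rather than the number of small disjoint maxonomials. In short: you have not proved the conjecture, you say so, and your analysis of the gap is accurate and aligned with the paper's own framing.
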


\subsection{Improving polynomial relations}

Since degree is polynomially related to almost all other Boolean complexity measures, \cref{cor:deg_rdegcubed} immediately yields polynomial relations between rational degree and those measures too. It would be interesting to see the extent to which these polynomial relations could be tightened. More specifically, the main results of this work turn all question marks in \cite[Table 1]{rdeg_iyer_2025} into either the number $3$ or $4$ (depending on whether a $\log n$ factor can be removed), but can we obtain matching numbers? In particular, we conjecture that \cref{thm:rdeg_cubed} is optimal up to a $\log n$ factor.
\begin{conjecture}\label{conjecture:quartic_separation}
    There exists a family of Boolean functions $f$ such that $\D(f) \geq \Omega(\rdeg(f)^3)$.
\end{conjecture}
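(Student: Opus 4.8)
\textbf{Towards \cref{conjecture:quartic_separation}.} The plan is to build an explicit family rather than to argue existence, and the first step is to identify which functions could simultaneously make every inequality in \cref{cor:final} and its proof tight. Writing $d \coloneqq \rdeg(f)$, a function with $D(f) = \Theta(d^4)$ must have: (i) $\sdeg(f) = \Theta(d)$; (ii) $\min_{x \in f^{-1}(0)} \bs_x(f) = \Theta(\sdeg(f)^2)$ and $\min_{x\in f^{-1}(1)}\bs_x(f) = \Theta(\sdeg(f)^2)$ --- and likewise on every restriction produced by the decision tree of \cref{thm:D_upper} --- so that \cref{cor:approx} and hence \cref{lem:sdeg_lower} are tight at every round; (iii) hitting sets in \cref{lem:hitset_upper} of size $\Theta(\deg(p^i)\cdot\min_x\bs_x(f^i))$, with no shortcut from a small maximal set of disjoint maxonomials; and (iv) the decision tree running for $\Theta(d)$ rounds, i.e.\ $\ndeg$ of the relevant restriction dropping by only $O(1)$ per round. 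I would try to meet all of these with a recursive or product construction whose base gadget combines a symmetric predicate on Hamming weight with $\Theta(d)$ sign changes clustered near weight $n/2$ (forcing sign degree $\Theta(d)$, in the spirit of \cref{rem:sdeg_lower}) with a block/addressing structure chosen to make $D$ scale as the number of blocks times the per-block cost.

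Granting such a construction, two verifications remain. The lower bound $D(f)\ge\Omega(d^4)$ should follow from multiplicativity of decision tree complexity under composition together with a certificate-complexity or adversary argument for the base gadget. The substantive step is the upper bound $\rdeg(f)\le O(d)$: by \cref{fact:rdeg_ndeg} this amounts to exhibiting nondeterministic representations of degree $O(d)$ for \emph{both} $f$ and $\neg f$, which one would hope to assemble from low-degree nondeterministic representations of the base gadget and of its negation.

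I expect the obstacles to be twofold. First, it is already unclear whether conditions (i)--(iv) can all hold as $d\to\infty$: condition (ii) demands that \emph{every} input of $f$ (and of each restriction) have block sensitivity $\Omega(\sdeg^2)$, yet the standard ``high sign degree'' gadgets --- symmetric functions with many sign changes --- typically have block sensitivity only $\Theta(1)$ at extremal-weight inputs such as $0^n$, exactly as for the function in \cref{rem:sdeg_lower}; reconciling uniformly large block sensitivity with sign degree growing like $\sqrt{\min_x\bs_x(f)}$ seems to require a genuinely new gadget. Second, and more seriously, rational degree has essentially no combinatorial handle beyond $\rdeg(f)\le\deg(f)$ --- there is no ``certificate'' notion witnessing small rational degree --- so the polynomials $p,q$ must be written down explicitly and bounded by hand, and one must control $\ndeg$ of $f$ and $\neg f$ \emph{simultaneously}: as $\MAJ_n$ shows, it is easy to have $\ndeg(f)=1$ while $\ndeg(\neg f)=\Omega(n)$. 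Composition alone seems unable to amplify the exponent --- decision tree complexity is multiplicative under composition, while the known bounds on how rational degree composes (e.g.\ $\rdeg(g\circ h)\le\rdeg(g)\deg(h)$) and the behaviour of sign degree under composition both preserve the relevant ratios --- so a new building block is needed regardless. A fallback, if a direct construction proves intractable, is a cheat-sheet-style amplification of a partial function with tiny rational degree (cf.\ \cref{fact:counter_example}), contingent on the separate open question of whether the cheat-sheet operation can blow up rational degree.
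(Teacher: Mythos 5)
The statement you were asked to prove is stated in the paper as a \emph{conjecture}, not a theorem: the authors explicitly leave it open, note that the best separation currently known is only quadratic (witnessed by the balanced $\ANDOR$ tree and by the pointer function of \cite{pointer_ambainis_2017}), and even record a competing conjecture of de Wolf, $D(f) \leq O(\ndeg(f)\cdot\ndeg(\neg f))$, which if true would refute it. Your proposal does not prove the conjecture either, and you say as much, so there is no paper proof to compare against; what you have written is a research plan with acknowledged obstacles.

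That said, your analysis is largely sound and tracks the paper's own discussion. Your reverse-engineering of what would make the chain $D(f) \le 4\sdeg(f)^2\rdeg(f)^2 \le 16\rdeg(f)^4$ tight is reasonable, though be careful with the phrase ``must have'': condition (i), $\sdeg(f)=\Theta(\rdeg(f))$, is genuinely forced by \cref{cor:final} together with $\sdeg(f)\le 2\rdeg(f)$, but conditions (ii)--(iv) are only necessary for the \emph{paper's particular decision tree} to be tight, not for the conjectured separation to hold --- a function with $D(f)=\Theta(\rdeg(f)^4)$ need not exhibit that exact bottleneck at every round. The tension you flag between high sign degree and uniformly large minimum block sensitivity is real, and the paper's own \cref{rem:sdeg_lower} illustrates it in the extreme. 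Your composition remark is also correct: $\rdeg(g\circ h)\le\rdeg(g)\deg(h)$ (easy by substitution) together with multiplicativity of $D$ does not amplify the exponent, so a new base gadget would be needed. The cheat-sheet fallback you raise is the same direction the paper gestures at in its footnote dismissing de Wolf's conjecture, and whether the cheat-sheet transformation preserves or blows up rational degree is, as you say, itself open. In short, you have correctly identified that this is an open problem and what the principal obstacles are; you have not closed any of them, and neither does the paper.
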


The currently best-known separation is quadratic. It can be witnessed by at least two different functions: the balanced $\ANDOR$ tree,\footnote{The balanced $\ANDOR$ tree on $m^2$ variables also simultaneously separates sign degree, rational degree, and degree: $\sdeg=O(\sqrt{m\log m})$  \cite{apxdeg_bun_2022} (also see \cref{rem:sdeg_lower}), $\rdeg=m$  \cite{rdeg_iyer_2025}, $\deg=m^2$.} or the ``pointer function'' that quadratically separates $\D$ from exact quantum query complexity \cite{pointer_ambainis_2017}. 

We note that \cite[Section 4]{nondeterministic_dewolf_2003} conjectured\footnote{\cite[Section 4]{nondeterministic_dewolf_2003} made this conjecture in the form $\D(f) \leq O(\NQ(f) \, \NQ(\neg f))$, but the same paper also showed that $\ndeg(f) = \NQ(f)$. For comparison, \cref{cor:final_ndeg} implies $\D(f) \leq O(\ndeg(f)^2 \ndeg(\neg f)^2)$.} $\D(f) \leq O(\ndeg(f) \, \ndeg(\neg f))$ for every Boolean function $f$.  If true, this would falsify \cref{conjecture:quartic_separation}. However, at the time \cite{nondeterministic_dewolf_2003} was published, contrived Boolean functions like those constructed in \cite{cheatsheets_aaronson_2016,pointer_ambainis_2017} were unknown. More
recently (in 2026), Ronald de Wolf informed us of a weaker conjecture:
\begin{equation}\label{eq:ronald}
     Q_E(f) \overset{?}{\leq} O(\ndeg(f) \, \ndeg(\neg f)),
\end{equation}
where $Q_E$ denotes the exact quantum query complexity of $f$. If \cref{eq:ronald} held, it would be a quantum counterpart to the classical fact $\D(f) \leq \C_0(f) \, \C_1(f)$.

\section*{Acknowledgments}

We thank Lance Fortnow, Joel Friedman, Zbigniew Jelonek, and Ronald de Wolf for helpful discussions, comments, and suggestions. We acknowledge the use of ChatGPT, Claude, and Gemini to search the literature and brainstorm proof strategies for \cref{sec:improved_lower_bound}.

\appendix
\crefalias{section}{appendix}

\section{Rational degree and quantum postselection}
\label{app:rdeg_postqe}

In this appendix, we show that rational degree exactly equals the zero-error postselected quantum query complexity. We will need the following technical lemma from \cite{rdeg_iyer_2025}.
\begin{lemma}[{\cite[Lemma 26]{rdeg_iyer_2025}}]\label{lem:avoidance_lemma}
    Let $N$ be a positive integer. Let $D$ be a finite set. Let $a_1,\dots, a_N \colon D \to \mathbb{R}$. Suppose that for all $x \in D$, there exists $i\in [N]$ such that $a_i(x) \neq 0$. Then there exist $c_1,\dots, c_N >0 $ such that, for all $x\in D$, $(c_1 a_1 + \dots + c_N a_N)(x) \coloneqq c_1 a_1(x) + \dots + c_N a_N(x) \neq 0$.
\end{lemma}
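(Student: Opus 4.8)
The plan is to induct on $N$, constructing the coefficients one at a time. For the base case $N = 1$, the hypothesis says $a_1(x) \neq 0$ for every $x \in D$, so $c_1 = 1$ works.

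For the inductive step, suppose the statement holds for $N-1$ functions, and consider $a_1, \dots, a_N \colon D \to \mathbb{R}$. First restrict attention to the set $D' \coloneqq \{x \in D : a_N(x) = 0\}$. On $D'$ the functions $a_1, \dots, a_{N-1}$ still satisfy the hypothesis: for $x \in D'$ we have $a_N(x) = 0$, so the index $i$ with $a_i(x) \neq 0$ guaranteed by the hypothesis must satisfy $i \leq N-1$. By the inductive hypothesis there exist $c_1, \dots, c_{N-1} > 0$ such that $g \coloneqq c_1 a_1 + \dots + c_{N-1} a_{N-1}$ is nonzero at every point of $D'$.

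It then remains to choose $c_N > 0$ so that $g + c_N a_N$ is nonzero on all of $D$. For $x \in D'$ this holds automatically for any $c_N$, since there $g(x) \neq 0$ and $a_N(x) = 0$. For $x \in D \setminus D'$ we have $a_N(x) \neq 0$, so $g(x) + c_N a_N(x) = 0$ precisely when $c_N = -g(x)/a_N(x)$; hence the ``forbidden'' values of $c_N$ form a finite set, of size at most $\abs{D \setminus D'}$. Since $(0,\infty)$ is infinite, we may pick $c_N > 0$ outside this finite set, which completes the induction.

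I do not expect a serious obstacle here; the only point to watch is that every coefficient stays strictly positive, which is immediate because at each stage we only need to avoid finitely many real numbers. Geometrically, the argument is just the standard fact that the open positive orthant of $\mathbb{R}^N$ cannot be covered by the finitely many hyperplanes $\{c \in \mathbb{R}^N : c_1 a_1(x) + \dots + c_N a_N(x) = 0\}$, one for each $x \in D$, each of which is a proper subspace because some $a_i(x) \neq 0$; the induction above is simply a hands-on way to exhibit a point of the orthant missing all of them.
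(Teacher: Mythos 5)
Your proof is correct, and it takes a genuinely different route from the paper's. The paper constructs all the coefficients explicitly via a magnitude-domination argument: using finiteness of $D$, it fixes for each $i$ constants $0 < b_i < B_i$ bracketing the nonzero values of $|a_i|$, then sets $c_1 = 1$ and $c_i b_i \coloneqq 1 + \sum_{j<i} c_j B_j$, so that for any $x$ the term with the largest index $i^*$ having $a_{i^*}(x)\neq 0$ strictly outweighs the sum of all earlier terms. Your argument is instead an induction on $N$: restrict to the vanishing locus $D'$ of $a_N$, apply the inductive hypothesis to $a_1,\dots,a_{N-1}$ on $D'$, and then choose $c_N > 0$ to avoid the at most $|D \setminus D'|$ ``forbidden'' ratios $-g(x)/a_N(x)$. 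So the paper's proof is quantitative (it exploits that the nonzero values of each $a_i$ are bounded away from $0$ and $\infty$), while yours is a pure cardinality argument (finitely many bad values versus the infinitude of $(0,\infty)$). Both exploit finiteness of $D$, and both are short; the paper's has the minor virtue of giving explicit $c_i$'s without an inductive hypothesis, while yours is arguably more conceptual, and your closing remark about the positive orthant not being covered by finitely many hyperplanes is exactly the right high-level picture and would also serve directly as a one-line non-inductive proof.
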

Since the proof is short, we reproduce it below for completeness.
\begin{proof}[Proof of \cref{lem:avoidance_lemma}]
    Since $D$ is finite, for all $i\in [N]$, there exists $0< b_i < B_i$ such that $a_i(D) \setminus \{0\}$ is a subset of $(-B_i, -b_i) \cup (b_i, B_i)$. Now define $c_1 = 1$ and for $i = 2,\dots, N$, define $c_i > 0$ by 
    \begin{equation}
        c_i b_i \coloneqq 1 + \sum_{j=1}^{i-1}c_j B_j> \sum_{j=1}^{i-1}c_j B_j.
    \end{equation}
    It is straightforward to verify that
    $(c_1 a_1 + \dots + c_N a_N)(x) \neq 0$ for all $x\in D$, as required.
\end{proof}

We will use \cref{lem:avoidance_lemma} to show\footnote{\cite{rdeg_iyer_2025} uses \cref{lem:avoidance_lemma} for two other purposes: proving an AND-composition lemma for nondeterministic degree, and proving an upper bound on the rational degree of their ``Middle Third'' function.}

\rdegpostqe*

Our proof assumes familiarity with the notation and definitions of \cite{postqe_mahadev_2015}.

\begin{proof}
    \cite[Theorem 2]{postqe_mahadev_2015} gives $\rdeg(f) \geq \PostQ_0(f)$ so it suffices to prove $\rdeg(f) \leq \PostQ_0(f)$. 
    
    Write $Q\coloneqq \PostQ_0(f)$ for convenience. Suppose there exists a $Q$-query postselected quantum query algorithm $\calA$ that computes $f$ exactly with zero error. Then, using the polynomial method \cite{polynomial_beals_2001} following \cite[Proof of Theorem 1]{postqe_mahadev_2015}, we deduce that there exists an integer $m \geq 2$ and complex multilinear polynomials $\alpha_s \in \mathbb{C}[X_1,\dots,X_n]$ for all $s\in \{0,1\}^m$ such that:
    \begin{enumerate}
        \item For all $s \in \{0,1\}^m$, the degree of $\alpha_s$ is at most $Q$;
        \item For all inputs $x\in \{0,1\}^n$, when $\calA$ is run on $x$, its state just before postselection is
        \begin{equation}
            \ket{\psi(x)} \coloneqq \sum_{s\in \{0,1\}^m}\alpha_s(x)\ket{s}; 
        \end{equation}
        \item For all inputs $x\in \{0,1\}^n$, when $\calA$ is run on $x$, its probability of outputting $1$ is
        \begin{equation}\label{eq:r_eq_f}
            r(x) \coloneqq \frac{\sum_{s \in \{0,1\}^m \colon s_1 = 1, s_2 = 1} \abs{\alpha_s(x)}^2}{\sum_{s \in \{0,1\}^m \colon s_1 = 1, s_2 = 1} \abs{\alpha_s(x)}^2+\sum_{t \in \{0,1\}^m \colon t_1 = 1, t_2 = 0} \abs{\alpha_t(x)}^2} = f(x).
        \end{equation}
        (In particular, the denominator of $r(x)$ is strictly positive.)
    \end{enumerate}
    For convenience of notation, we will henceforth write
    \begin{equation}
        S \coloneqq \{s \in \{0,1\}^m \colon s_1 = 1, s_2 = 1\} \quad \text{and} \quad T \coloneqq \{t \in \{0,1\}^m \colon t_1 = 1, t_2 = 0\}.
    \end{equation}
    
    Then, \cref{eq:r_eq_f} can be written as
    \begin{equation}\label{eq:r_eq_f_simp}
         r(x) \coloneqq \frac{\sum_{s \in S} \abs{\alpha_s(x)}^2}{\sum_{s \in S} \abs{\alpha_s(x)}^2+\sum_{t \in T} \abs{\alpha_t(x)}^2} = f(x),
    \end{equation}
    which implies:
    \begin{itemize}
        \item For all $x\in f^{-1}(0)$, we have $\sum_{s\in S} \abs{\alpha_s(x)}^2=0$ and $\sum_{t\in T} \abs{\alpha_t(x)}^2\neq 0$. Therefore, $\alpha_s(x) = 0$ for all $s\in S$, and $\alpha_t(x) \neq 0$ for some $t\in T$.
        \item For all $x\in f^{-1}(1)$, we have $\sum_{s\in S} \abs{\alpha_s(x)}^2\neq 0$ and $\sum_{t\in T} \abs{\alpha_t(x)}^2 = 0$. Therefore, $\alpha_s(x) \neq 0$ for some $s\in S$, and $\alpha_t(x) = 0$ for all $t\in T$. 
    \end{itemize}

    For all $s\in \{0,1\}^m$, decompose $\alpha_s \in \mathbb{C}[X_1,\dots,X_n]$ into its real and imaginary parts:
    \begin{equation}
        \alpha_s = a_{s,0} + i a_{s,1},
    \end{equation}
    where $a_{s,0}, a_{s,1} \in \mathbb{R}[X_1,\dots,X_n]$ each have degree at most $Q$.

    Since a complex number is zero if and only if its real and imaginary parts are both zero, we see:
    \begin{itemize}
        \item For all $x\in f^{-1}(0)$, we have  $a_s(x) = 0$ for all $s\in S \times \{0,1\}$, and $a_t(x) \neq 0$ for some $t\in T \times \{0,1\}$.
        \item For all $x\in f^{-1}(1)$, we have $a_s(x) \neq 0$ for some $s\in S\times \{0,1\}$, and $a_t(x) = 0$ for all $t\in T\times \{0,1\}$. 
    \end{itemize}
    
    Then, \cref{lem:avoidance_lemma} gives $c_t>0$ for all $t\in T \times \{0,1\}$ and $c_s>0$ for all $s\in S\times \{0,1\}$ such that:
    \begin{itemize}
        \item For all $x\in f^{-1}(0)$, we have $\sum_{t \in T\times \{0,1\}} c_t \cdot a_t(x) \neq 0$.
        \item For all $x\in f^{-1}(1)$, we have $\sum_{s \in S\times \{0,1\}} c_s \cdot a_s(x) \neq 0$.
    \end{itemize}

    Therefore, it is clear that $R \in \mathbb{R}(X_1,\dots,X_n)$ defined by
    \begin{equation}
        R \coloneqq \frac{\sum_{s\in S\times \{0,1\}} c_s \cdot a_s}{\sum_{s\in S \times \{0,1\}} c_s \cdot a_s+\sum_{t\in T\times \{0,1\}} c_t \cdot a_t}
    \end{equation}
    is a rational representation of $f$, and the degrees of $R$'s numerator and denominator are each at most $Q$. Therefore, $\rdeg(f) \leq Q = \PostQ_0(f)$, as required.
\end{proof}

\newpage
\printbibliography

\end{document}